\documentclass[aps,prx,reprint,superscriptaddress,nofootinbib,floatfix]{revtex4-2}

\usepackage{graphicx}
\usepackage{comment}
\usepackage{tabularx}
\usepackage{bm}
\usepackage{amsmath,amssymb,mathtools}
\usepackage{array}
\usepackage{booktabs}
\usepackage{hyperref}
\hypersetup{colorlinks=true,linkcolor=blue,citecolor=blue,urlcolor=blue}
\usepackage[T1]{fontenc}
\usepackage[protrusion=true,expansion=false]{microtype}

\newcommand{\Tr}{\operatorname{Tr}}

\makeatletter

\@ifundefined{lemma}{%
  \newtheorem{lemma}{Lemma}
}{}

\@ifundefined{proof}{%
  \newenvironment{proof}{%
    \par\addvspace{6pt}%
    \noindent\textit{Proof.}\ \ignorespaces
  }{%
    \unskip\nobreak\hfill\ensuremath{\square}%
    \par\addvspace{6pt}%
  }
}{}

\makeatother

\begin{document}

\title{Imaging Stars at the Quantum Compatibility Limit}

\author{Xinyao Guo}
\affiliation{Frontier Science Center for Quantum Information, Department of Physics, Tsinghua University, Beijing 100084, China}
\author{Haixing Miao}
\affiliation{Frontier Science Center for Quantum Information, Department of Physics, Tsinghua University, Beijing 100084, China}
\author{Zheng Cai}
\affiliation{Department of Astronomy, Tsinghua University, Beijing 100084, China}
\author{Huan Yang}
\email{hyangdoa@tsinghua.edu.cn}
\affiliation{Department of Astronomy, Tsinghua University, Beijing 100084, China}

\begin{abstract}
Imaging astrophysical sources with a multi-station interferometer is intrinsically a multiparameter quantum-estimation problem.  {Using tools from multiparameter quantum metrology,} we show that time-resolved repetitive or adaptive measurements in an \(N\)-station array suffer a fundamental array-level incompatibility among visibility estimators. Collective measurements,  {which coherently process the received starlight across multiple time bins in a single joint readout}, remove the array-size penalty up to an order-unity factor, yielding an asymptotic \(O(\sqrt{N})\) enhancement for the  {directional-averaged} SNR of visibility measurement. 
 {We propose a memory-assisted interferometric architecture
designed to implement collective readout through coherent storage and joint quantum processing.} Imaging simulations and Fisher-information analyses demonstrate that collective measurements improve image reconstruction in near-term arrays and enhance the resolving power of future long-baseline architectures, with pronounced benefits for representative AGN targets such as NGC~4151 and 3C~273. These results highlight collective measurement as a promising building block for future quantum-assisted interferometric arrays for stellar imaging.
\end{abstract}

\maketitle

\section{Introduction}
Long-baseline interferometric arrays that directly interfere the collected light is a promising route to high-resolution imaging of astrophysical sources.
Optical and near-infrared arrays with kilometer-scale baselines would approach the microarcsecond regime, opening direct access to visible-band imaging of scientifically compelling targets, including active galactic nuclei, broad-line regions, and jet-launching environments
\cite{BaldwinHaniff2002,Monnier2003,Monnier2007,Petrov2007,Tatulli2007,Benisty2009,Gravity2018}.
This possibility has renewed interest in interferometric telescope arrays beyond existing facilities. Multiple architectures have been proposed linking distant stations into large-scale interferometric networks in both classical and quantum settings~\cite{Chelli2009,Lacour2014,Ceus2011,Ceus2013,Tsang2011,Gottesman2012,
Khabiboulline2019PRL,Khabiboulline2019PRA,Huang2022,Marchese2023,Brown2023,Wang2025,Huang2024,
Stas2026}.  

 {Interferometric imaging with a multi-station array is intrinsically a multi-parameter quantum-estimation problem: the spatial brightness distribution of an astrophysical source is encoded in the amplitudes and phases of the inter-station coherence of the received optical field. Each pair of stations, or baseline, probes a distinct Fourier component of the source structure, so image formation requires the inference of these distributed coherence parameters ~\cite{BaldwinHaniff2002,Monnier2003,Tsang2011}. Existing network proposals span various classical and quantum platforms
\cite{Chelli2009,Lacour2014,Ceus2011,Ceus2013,Tsang2011,Gottesman2012,
Khabiboulline2019PRL,Khabiboulline2019PRA,Huang2022,Marchese2023,Brown2023,
Huang2024,Wang2025,Stas2026,Wang2026MemoryAssistedInterferometer}.
These designs are primarily guided by the feasibility of implementation and are therefore often suboptimal in terms of multiparameter estimate sensitivity}.
However, the optimal  {networking strategy and measurement precision}  in a non-local quantum network that contains many stations have
not been systematically discussed.


 {In this work, we address this optimal networking problem in three steps. First, by formulating nonlocal imaging as a multiparameter-estimation problem and invoking quantum-metrological precision limits, we identify collective measurement as the key to quantum-information-preserving networking \, \cite{GillMassar2000,Ragy2016,Yamagata2013,Demkowicz2020,Chen2022Hierarchy,Chen2022Incompatibility,Imai2026Hierarchy}. Specifically, in an \(N\)-station array, any time-resolved repetitive or adaptive single-copy strategy suffers from an array-level incompatibility among visibility estimators. Collective readout removes the array-size-dependent penalty up to an order-unity factor, yielding an asymptotic \(O(\sqrt{N})\) enhancement in the direction-averaged visibility SNR. Second, guided by this result, we propose a memory-assisted network architecture designed to implement such collective measurements. Finally, through imaging simulations and Fisher-information analysis, we show that the sensitivity gain from collective measurement directly yields improved reconstruction quality and an enhanced ability to resolve representative astrophysical sources, such as NGC 4151 and 3C 273, even in the presence of detector loss and atmospheric piston noise, for both practically feasible near-term arrays and more futuristic long-baseline architectures.  Together, these results establish collective readout as a scalable route toward stellar imaging at the quantum-compatibility limit.}

\section{Interferometric imaging as a measurement problem} 
We consider an interferometric network comprising \(N\) individual stations that coherently receive weak thermal light from a remote astrophysical source. Figure~\ref{fig:illustration} illustrates how such an array enables high-resolution imaging.

\begin{figure}[t]
    \centering
    \includegraphics[width=\columnwidth]
    {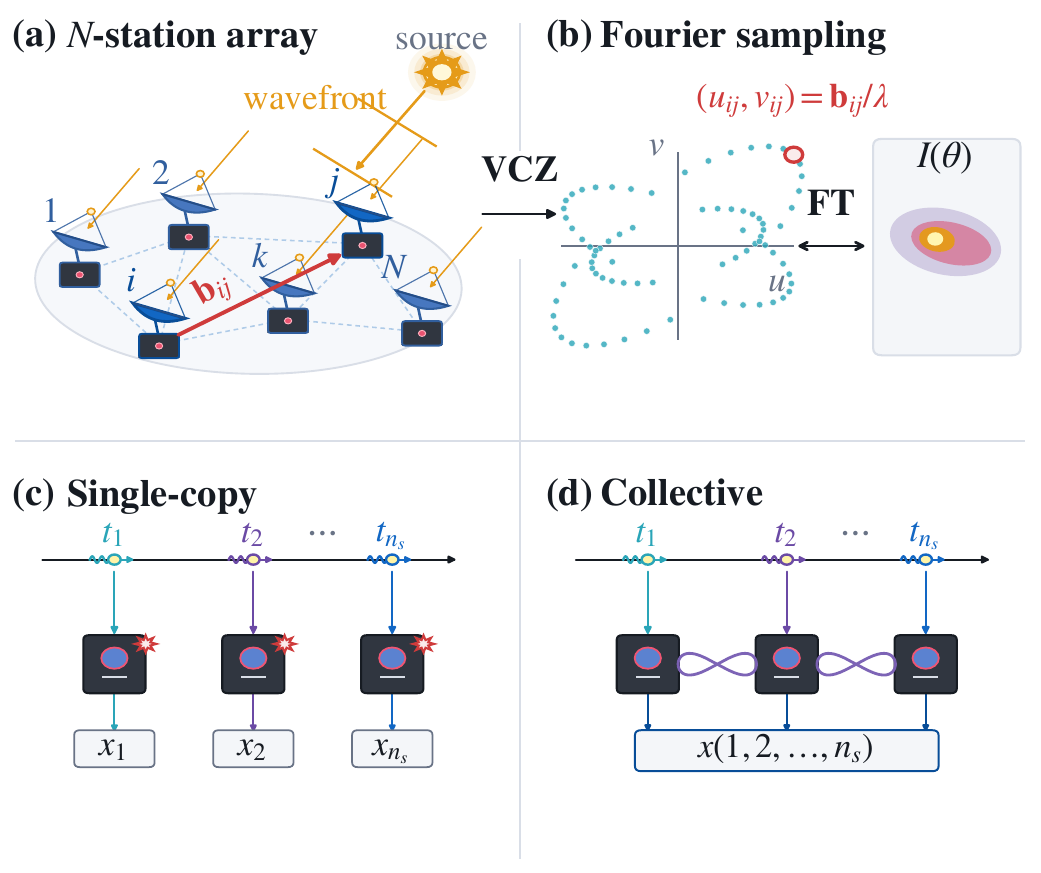}
    \caption{Schematic of a nonlocal interferometric array for high-resolution imaging.
\textit{Top:} Physical picture of interferometric imaging, in which multiple
baselines sample spatial-frequency components that jointly reconstruct the
source. \textit{Bottom:} Comparison of single-copy and collective measurements:
for the former, quantum-state collapse happens for each copy with its time label, whereas the latter
coherently processes multiple copies before a joint, time-unresolved collapse.}
    \label{fig:illustration}
\end{figure}

From a quantum perspective, in the weak-light regime and in the absence of instrumental noise, the quantum state of the \(N\)-site optical field can be consistently truncated to its vacuum and one-photon sectors~\cite{Tsang2011,Khabiboulline2019PRA,Marchese2023}:
\begin{equation}
\rho\approx(1-s)\rho^{(0)}+s\rho^{(1)}\,,
\end{equation}
where $s\ll1$ is the average occupation number, $\rho^{(0)}$ denotes vacuum state in all mode, and $\rho^{(1)}$ is the subspace with single photon occupation.  {Multi-photon event in a coherence time of temporal mode only happens with a low probability of $O(s^2)$, hence, truncating the state to the vacuum and one-photon sectors introduces only a second-order correction, and doesn't modify the fundamental physics.}

According to the van Cittert–Zernike theorem, the normalized off-diagonal elements of the single-photon occupation part of the density matrix,  also known as the \textit{complex visibility}, is one-to-one mapped to the Fourier component of targeting astrophysical source via~\cite{Monnier2003,Tsang2011,CZ}
\begin{equation}
\begin{aligned}
\frac{\rho^{(1)}_{ij}}{\sqrt{\rho^{(1)}_{ii}\rho^{(1)}_{jj}}}
=\frac{
\int I_{\lambda}(\boldsymbol{\theta})\exp\!\left[-2\pi i\,\mathbf{b}_{ij}(t)\cdot\boldsymbol{\theta}/\lambda\right]d\theta_xd\theta_y
}{
\int I_{\lambda}(\boldsymbol{\theta})d\theta_xd\theta_y
}\,,
\end{aligned}
\end{equation}
where $I_{\lambda}(\boldsymbol{\theta})$ is the intensity distribution of source, $\mathbf{b}_{ij}$ is the position vector pointing from station $i$ to station $j$, and the integration runs over the field of view of local telescopes. Imaging stars therefore corresponds to a quantum measurement problem of obtaining a joint estimation of all off-diagonal elements of density matrix.

Another feature of interferometric imaging lies in the integrated observation for stellar interferometry. The integration time is typically much longer than the lifetime of a temporal mode, yet much shorter than the characteristic astrophysical timescale.  {The continuously received field can therefore be viewed as a stream of statistically independent and quantumly indistinguished copies of the same source-dependent quantum state, each encoding the same set of unknown classical parameters characterizing the astrophysical source.} How to operate these copies naturally defines two classes of readout~\cite{GillMassar2000,SuzukiYangHayashi2020,
DemkowiczGoreckiGuta2020,Conlon2023Collective,Imai2026Hierarchy}:
\begin{enumerate}
\item \emph{Single-copy measurement},  {including repetitive and adaptive measurement, in which each copy is measured and undergoes quantum-state collapse separately, yielding a time-labeled classical outcome. Information from different copies is combined only through classical
statistical averaging or feedback that updates
the working point of the subsequent measurement.}
\item \emph{Collective measurement},  {in which copies are preserved, coherently
combined, and then collapsed altogether by
one joint measurement. Such a joint readout allows different multi-copy probability amplitudes to interfere, and can access collective observables and outcome correlations unavailable to any single-copy strategy.}
\end{enumerate}

In the following part, we focus on the phase-estimation half of problem, that is, jointly read out all the visibility phases in the density matrix. We show that, collective measurement brings fundamental sensitivity enhancement in the multi-parameter imaging task.

\section{The curse of incompatibility}
Considering the case that we get sufficient copies of the targeting N-station thermal state. For a given multi-parameter measurement captured by a positive-operator-valued measure (POVM) $\boldsymbol{\Pi}$, its capability for 
joint parameter estimation is described by the Classical Fisher Information Matrix (CFIM), denoting by $J(\boldsymbol{\Pi})$\,\cite{Helstrom1976,BraunsteinCaves1994,Rao1945}. For an estimable linear combination of visibility phases represented by
\(
\hat \phi=\boldsymbol{c}^{\mathsf T}\boldsymbol{\phi}\), the scalar Fisher information and
a lower bound for minimum estimation variance
under $n$ individual measurements are 
\begin{equation}
F=\left[\bm c^{\mathsf T}J(\bf \Pi)^{+}\bm c\right]^{-1}\,,\quad
(\sigma_{\phi})^2\geq\frac{1}{n F}\,.
\label{eq:loop_phase_variance}
\end{equation}

We compare the performance of  single-copy measurement and collective measurement by comparing their CFIM and loop-wise Fisher information per average copy.
 {For single-copy measurement, the most natural realization is a uniform
edge-first measurement, in which optical mode at each station is divided equally
among its \(N-1\) baselines, and each matched pair of branches is interfered on
a balanced beam splitter, thereby projecting the photon uniformly onto
pairwise spatial-mode bases.  The output-port clicks then provide an
independent joint interferometric measurement of the visibilities. Notably, this measurement is also the blueprint of many quantum-assisted non-local networks, including the \(W\)-state-assisted implementations of the standard Gottesman–Jennewein–Croke (GJC) and Khabiboulline–Borregaard–de Greve–Lukin (KBGL) scheme\,\cite{Gottesman2012,Khabiboulline2019PRL,Khabiboulline2019PRA}. Mathematically, this measurement can be described by the following POVM:}
\begin{equation}
\Pi_{ij,\pm}^{\mathrm{edge}}
=
\frac{1}{N-1}
\left|\pm_{\vartheta}^{(ij)}\right\rangle
\left\langle\pm_{\vartheta}^{(ij)}\right|,
\quad
\sum_{i<j,\,\pm}\Pi_{ij,\pm}^{\mathrm{edge}}
=
I_N\,.
\end{equation}
Here $|\pm_{\vartheta}^{(ij)}\rangle
\equiv (|i\rangle\pm e^{-i\vartheta}|j\rangle)/{\sqrt{2}}$, and the basis $|i\rangle$ represents the superposition that the i-th station has single-photon occupation while the rest have no photon, and the working point $\vartheta_{ij}$ is set as $\arg(\rho_{ij})+\pi/2$ for phase measurement. 

The upper limit of CFIM is the Quantum Fisher information matrix (QFIM) $J^{Q}$\,\cite{BraunsteinCaves1994,Sidhu2021}, which is the single-parameter envelope of CFIM for all attainable measurements. Normalized by the QFIM, we could see that the time-resolved uniform pairwise measurement only saturates the QFI by:

\begin{equation}
\frac{F(\{\Pi_{ij}^{edge}\})}{F ^Q}=\frac{\alpha_ {}}{N-1}\,,
\end{equation}
where $\alpha {}\in [0,1]$ is a structural penalty caused by only adopting pairwise measurement\,\cite{Zhang2026ArraySPADE}. Here, the $N-1$ penalty arises from the incompatible nature within the  estimator family. Specifically, for repetitive or adaptive measurement on individual copies, two parameters can be jointly extracted at the maximum precision only if their optimal estimators---symmetric logarithmic derivative (SLD) operators---commute at the operator level~\cite{Ragy2016,Yang2019}. However, in interferometric imaging, SLD operators for estimating visibility phases $\theta_{ij} \equiv \arg \rho_{ij}$: $\hat{L}_{ij}={\rm i}(e^{{\rm i}\vartheta_{ij}}|i\rangle\langle j|-e^{-{\rm i}\vartheta_{ij}}|j\rangle\langle i|) $ , share a baseline station exhibit intrinsic non-commutativity
\begin{equation}\label{eq:commute}
\left[\hat{L}_{ij},\hat{L}_{ik}\right]={\rm i}(e^{{\rm i}(\vartheta_{ij}-\vartheta_{ik})}|i\rangle\langle j|-e^{-{\rm i}(\vartheta_{ij}-\vartheta_{ik})}|j\rangle\langle i|)
\neq 0\,.
\end{equation}

 {By applying the Gill–Massar one-copy information capacity bound to the imaging task~\cite{GillMassar2000}, we illustrate that} this curse of incompatibility is universal for any possible single-copy measurement. Specifically, any repetitive or adaptive POVM on single copy suffers an \(O(N)\) QFIM-averaged penalty relative to the single-parameter optimum, given by:
\begin{equation} 
\begin{aligned}
\left\langle \frac{F{(\Pi)}}{F^Q} \right\rangle_Q &\equiv \int_{S^{r-1}} d\boldsymbol{u} \frac{\boldsymbol{c}_ \top J_Q^{-1} \boldsymbol{c}}{\boldsymbol{c}^\top [J{(\Pi)}]^{-1} \boldsymbol{c}} \\
&\le \min\left\{1, \frac{N-1}{r}\right\} \sim O(N^{-1})\,,
\end{aligned}
\end{equation}
where $r$ is the rank of the QFIM, denoting the number of practically  {estimable parameters of physical interest}. In the non-degenerate scenario where $r = E = N(N-1)/2$, this expression collapses to a robust bound of $2/N$. Here, $\boldsymbol{u}_ {} = J_Q^{-1/2} \boldsymbol{c}_ {} / \sqrt{\boldsymbol{c}_ {}^\top J_Q^{-1} \boldsymbol{c}_ {}}$ represents a unit vector in the parameter space, with the integration taken uniformly over the unit sphere $S^{r-1}$. A simplified proof is provided in Appendix\,\ref{sec:bounds}.

Things become different when collective measurement that jointly operates on $n_s$ independent copies are allowed. In the asymptotic limit of $n_s\to \infty$, the condition for joint saturation of two directional QFI can be weakened from the operator-level commutativity to an ensemble-averaged \textit{weak commutativity}~\cite{Ragy2016,Yamagata2013,Demkowicz2020},
\begin{equation}\label{eq:weak_com}
[\hat{L}_{ij}, \hat{L}_{kl}]=0\to {\rm Tr}(\rho [\hat{L}_{ij}, \hat{L}_{kl}])=0\,,
\end{equation}

 {This weaker condition substantially enlarges the set of parameter directions
that can be estimated jointly at their single-parameter limits. A concrete four-station example can illustrate this distinction. Consider the
visibility-phase directions on edges \(12\), \(13\), and \(14\), with
corresponding SLDs \(\hat L_{12}\), \(\hat L_{13}\), and \(\hat L_{14}\).
For a direction
\(\hat L_v=\alpha\hat L_{13}+\beta\hat L_{14}\), a single-copy receiver
would require
\begin{equation}
[\hat L_{12},\hat L_v]
=
\alpha[\hat L_{12},\hat L_{13}]
+\beta[\hat L_{12},\hat L_{14}]
=0 .
\end{equation}
At a generic full-rank working point, the two commutators on the right-hand
side are nonzero and linearly independent operators. Consequently, no
nontrivial choice of \((\alpha,\beta)\) can make the two directions strongly
commuting.}

 {For collective measurement, however, define the real weak-commutator coefficients
\begin{equation}
b_{ef}\equiv
\frac{1}{2{\rm i}}\,
{\rm Tr}\!\left(\rho[\hat L_e,\hat L_f]\right).
\end{equation}
The state-dependent direction
\begin{equation}
\hat L_v\propto
b_{12,14}\hat L_{13}-b_{12,13}\hat L_{14}
\end{equation}
automatically satisfies
\begin{equation}
{\rm Tr}\!\left(\rho[\hat L_{12},\hat L_v]\right)=0 ,
\end{equation}
even though \([\hat L_{12},\hat L_v]\neq0\) generally remains true as an
operator identity. For example, when
\(b_{12,13}=b_{12,14}\neq0\), the weakly compatible direction is simply
\(\hat L_v\propto\hat L_{13}-\hat L_{14}\). Thus an asymptotic collective
receiver can attain the directional QFI for edge \(12\) and this particular
\(13\)--\(14\) combination simultaneously. The same block of copies can then
contribute to both estimates, rather than being divided between incompatible
single-copy settings, thereby improving the joint sensitivity.}

 {We generalize the insight gaining from this example to more general case.} We find that, at the asymptotic limit, incorporating collective measurement guarantees a superior scaling for the multi-parameter estimation performance. Specifically, for any given single-copy POVM on the single-copy, there is a series of ways to promote it as a collective POVM jointly acting on all the copies. And, the weighted-average performance gap between the single-copy measurement and the optimal promoted joint measurement is bounded by:
\begin{equation}
\begin{aligned}
\left\langle
\frac{F_{ \infty}^{\mathrm{col}}(\Pi)}
     {F {}{(\Pi)}}
\right\rangle_{A_{\Pi}} \equiv \int_{S^{r^\prime-1}} d\boldsymbol{u}_ {}^\prime &\frac{\boldsymbol{c}_ {}^\top [J(\Pi)]^{-1} \boldsymbol{c}_ {}}{\boldsymbol{c}_ {}^\top [J^{\rm col}{(\Pi)}]^{-1} \boldsymbol{c}_ {}}
\geq \frac{r^\prime}{2(N-1)},\\
\qquad
\left\langle
\frac{\mathrm{SNR}_{ \infty}^{\mathrm{col}}(\Pi)}
     {\mathrm{SNR} {}{(\Pi)}}
\right\rangle_{A_{\Pi}}
&\geq\sqrt\frac{{r^\prime}}{2(N-1)}\,.
\end{aligned}
\end{equation}
Here, $r^\prime$ is the effective rank of CFIM for single-copy POVM, and $A_\Pi$ is a semi-definite matrix depending on both the POVM $\Pi$ and the targeting state $\rho$. The exact mathematical description of $A_\Pi$ and the promotion of collective measurement, together with the proof of this theorem, are both included in Appendix\,\ref{sec:bounds}. 
Notably, in the non-degenerate case, this predicts a universal $O(\sqrt{N})$ sensitivity enhancement by simply applying collective measurement of:
\begin{equation}
\left\langle\frac{\mathrm{SNR}_{\infty}^{\mathrm{col}}(\Pi)}
     {\mathrm{SNR} {}{(\Pi)}}
\right\rangle_{A_{\Pi}}
\geq\frac{\sqrt{N}}{2}\,.
\end{equation}

 {When only a finite number \(n_s\) of copies are jointly
processed, weak commutativity alone no longer guarantees exact joint
saturation. For regular imaging working points and smoothly responding
collective receivers, the direction-averaged finite-copy result approaches its asymptotic limit
according to the following conservative array-wide scaling:
\begin{equation}
\left\langle\frac{F_{\infty}^{\mathrm{col}}(\Pi)}   {F_{n_s}^{\mathrm{col}}{(\Pi)}}
\right\rangle_{A_{\Pi}}=1\,+ O\left(\frac{N}{\sqrt{n_s}}\right)\,,
\end{equation}
where \(F_{ {},n_s}^{\mathrm{col}}\) and
\(F_{ {},\infty}^{\mathrm{col}}\) denote the directional Fisher information
of the finite-copy and asymptotic collective receivers, respectively. Meanwhile, the implicit prefactor depends on the local source state and the receiver
\(\Pi\).
Combined with the asymptotic \(O(N)\) Fisher enhancement of collective measurement compared to single-copy measurement, this result
identifies a characteristic crossover copy number \(n_s^\star\sim N^2\).
For \(n_s\lesssim N^2\), collective measurement guarantees a global Fisher gain grows as
\(O(\sqrt{n_s})\), corresponding to an SNR enhancement of
\(O(n_s^{1/4})\). For \(n_s\gtrsim N^2\), the SNR gain crosses over to the
\(O(\sqrt{N})\) regime and can nearly attain the asymptotic limit, with exact
saturation approached as \(n_s/N^2\rightarrow\infty\). Detailed derivation
of this scaling is provided in Appendix\,\ref{sec:bounds}.} 

Worth to point out, estimating either the magnitude or the real and imaginary components of \(\rho_{ij}^{(1)}\) requires only a change in the local operating point \(\theta\) of the estimators, while the operator level incompatibility stressed in Eq.\,\ref{eq:commute} always holds. Consequently, the scaling applies broadly to the entire complex-visibility estimation, rather than only restricted in phase.

\section{Physical Implementation}\label{sec:implementation}
 {We now propose a memory-assisted N-station interferometer designed to realize collective readout by coherently storing
and jointly processing source-bearing temporal modes.
The architecture builds on the  KBGL
protocol,  and, as illustrated in Fig.\,\ref{fig:weak_meas}, contains three
stages}\,\cite{Khabiboulline2019PRL}: 
\begin{enumerate}
\item \textit{Memory encoding.---}
Single photon events in a continuous stream of incoming weak thermal light are
sequentially and coherently stored into \(n_s\) parallel registers in a quantum non-demolition manner. Key technology required for such memory-assisted encoding is heralded single-photon memory, which has recently been demonstrated experimentally in kilometer-scale baselines~\cite{Stas2026,Wang2026MemoryAssistedInterferometer}. Because the memory procedures
preserve each optical state without measurement-induced collapse, the
registers naturally constitute identical copies for the subsequent
collective readout. Meanwhile, the block number \(n_s\) can be chosen flexibly to balance the desired
sensitivity enhancement against the technical complexity of the collective
readout.

\item \textit{Joint quantum processing.---}
Before detection, a coherent channel \(\mathcal C_{n_s}\) jointly mixes the \(n_s\) memory registers by a unitary or a general CPTP map. In the asymptotic regime \(n_s\to \infty\), random purification is a viable implementation route with circuit complexity polynomial in \(n_s\) and the single-copy Hilbert-space dimension~\cite{Zhou2026Purification}. For finite \(n_s\), the circuit must instead be optimized and adapted to the operating point specified by the source state \(\rho\).

\begin{figure}[t]
\centering
\makebox[\linewidth][c]{\includegraphics[width=0.99\linewidth]{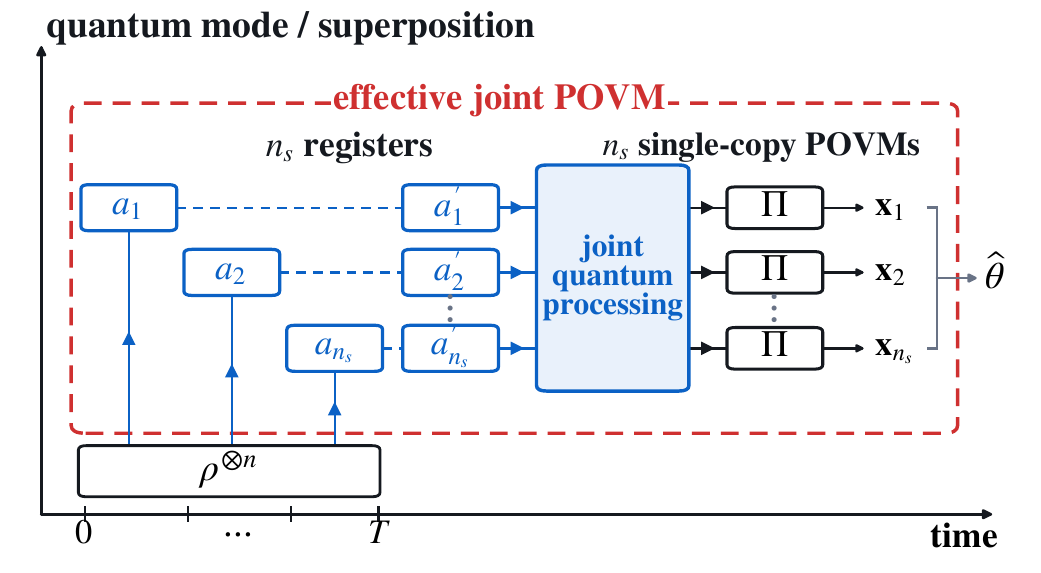}}
\caption{\label{fig:weak_meas}
 {Schematic illustration of time-unresolved measurement. Here, all the quantum operations in the red block effectively serve as a joint measurement acting on $n_s$ undistinguished copies of the quantum state.}}
\end{figure}

\item \textit{Single-copy readout.---}
The \(n_s\) output registers are measured using identical single-copy POVMs. Previous joint operation on the copies induces mutual correlations among the measurement outcomes, enabling higher estimation precision than single-copy measurements.
\end{enumerate}


In Appendix\,\ref{sec:finite}, we provide a concrete example for \(N=4\) and \(n_s=2\), in which the joint-processing layer is constructed through semi-definite optimization.

 {To connect the schematic collective-readout architecture above to a realistic multi-site interferometer, we now specify how the source coherence is mapped onto the quantum state received by the network. Over an observation time \(T\) and bandwidth \(\Delta f\), the source supplies \(n\simeq T\Delta f\) independent temporal modes sampling the same astrophysical coherence matrix. Each mode is subsequently modified by station-dependent photon collection and transport efficiencies and detector backgrounds, and may be processed either separately or through the collective readout described above. We capture these detector-level effects using the following phenomenological single-photon density matrix:}
\begin{equation}
\begin{aligned}
\rho_{ii}^{(1)}&=\frac{s_i}{s}=\frac{\eta_i u_i+\varepsilon_i}{s}\,;\\
\rho_{ij}^{(1)}&=\frac{1}{s}\sqrt{\eta_i\eta_j u_i u_j} \,g_{ij}\,,
\end{aligned}
\end{equation}
where \(s_i\) denotes the effective mean photon occupation number at station \(i\), and \(s=\sum_i s_i\) is the mean total photon occupation number across the array. Here, \(\eta_i\in[0,1]\) is the total photon efficiency from the \(i\)-th local telescope site to the final measurement stage, including photon collection efficiency, coupling/memory efficiency, and transport efficiency; \(\epsilon_i\) denotes the local background occupancy in the same detected mode, accounting for background light as well as local classical and quantum noise; and \(u_i\) is the photon occupation number in the ideal lossless case, which is determined by the array configuration and source parameters through
\begin{equation}
\begin{aligned}
u_i(f)&=\frac{\pi D_i^2}{4}\frac{3631\,{\rm Jy}}{hf}\,10^{-0.4m}\,,
\end{aligned}
\label{eq:uf_mode}
\end{equation}
with \(D_i\) being the effective aperture of i-th telescope, \(f\) being the observing frequency, and $m(f)$ being the frequency-resolved AB magnitude of the targeting source.

 {The efficiency factors \(\eta_i\) incorporate several local and nonlocal loss channels. Among them, long-distance optical transport directly determines whether the array can employ a central-hub architecture or instead requires genuinely nonlocal quantum links.} An important length scale for the interferometric array is characteristic scale of fiber attenuation, $L_0\approx 10\,\rm km$. If the signal photon is directly transported via fiber, they would be attached a transmittance loss that scales exponentially with the baseline length, \(\eta\propto e^{-L/L_0}\), while such exponential loss penalty can in principle be removed by pre-distributed entanglement or heralded quantum memories\,\cite{Gottesman2012,Khabiboulline2019PRL,Khabiboulline2019PRA}. Ground-based interferometric arrays can therefore be divided into two regimes: near-term proposals operating at \(L\lesssim L_0\), where the network could still be 
constructed via fiber-based direct signal transport and local quantum operations at the central hub, and long-term plans operating at \(L\gg L_0\), which necessitates a non-local quantum network.

Another practical issue for ground-based network is  {piston noise driven by} atmospheric turbulence, which introduces  unknown local phase $\delta_i$ to the light arriving at each individual station, modifying the off-diagonal elements by $\rho_{ij} \to \rho_{ij}e^{i(\delta_i-\delta_j)}$, thereby corrupting the intrinsic structural phase of the source. These local phase jitters $\delta_i$ are typically unknown and change rapidly at an atmospheric fluctuation timescale($\sim 10\rm \,ms$), and is hard to be fully removed by current adaptive optics techniques~\cite{Monnier2003,Monnier2007}.
Nonetheless, closure phase, defined as
\begin{equation}
\Phi_{\rm cl}
=\arg\!\left(
\rho_{ij}^{(1)}
\rho_{jk}^{(1)}
\rho_{ki}^{(1)}
\right)\,,
\label{eq:closure_definition}
\end{equation}
serves as the lowest-order gauge-invariant observable { under station local phase uncertainty and cancels station-local pistons
while retaining gauge-invariant source-structure information}\,\cite{Jennison1958,CornwellWilkinson1981,Lohmann1983,Baldwin1986,Monnier2003,Monnier2007,Thyagarajan2022,Chael2018,BroderickPesce2020}. In the following imaging simulations, we therefore retain only the closure-phase information, rather than the full set of visibility phases. The reliable amplitude and phase information of the astrophysical source therefore has dimension:
\begin{equation}
    r_{\rm amp}=\frac{N(N-1)}{2}\,;\, r_{\rm phase}=\frac{(N-1)(N-2)}{2}\,.
\end{equation}

\section{Astrophysical implication}
With these settings of a non-local collective network, we discuss the astrophysical relevance might bring by the collective measurement in both near-term and long-term schemes.

For near-term implication, we consider a benchmark imaging mission with a broadband, fiber-linked benchmark array  sited in Hawaii. The array consists of six ground-based telescopes: three existing visible-band facilities, Keck I, Subaru, and Gemini North, together with three additional 6m remote stations. The longest baseline is about \(10\,{\rm km}\), corresponding to an angular scale of  \(\sim 10\,\mu{\rm as}\) in the visible band. Light collected by each telescope is fiber-transported to a central hub. Topology of the network is shown in Fig.~\ref{fig:clean}. We assume a local photon collection efficiency of $0.02$ , fiber loss of $0.2\,\rm dB/km$, and white background occupation of $\varepsilon=10^{-9}$, 
 {which are illustrative sensitivity benchmarks feasible within current or near-future technologies.}

\begin{figure}[t]
    \centering
    \includegraphics[width=\columnwidth]{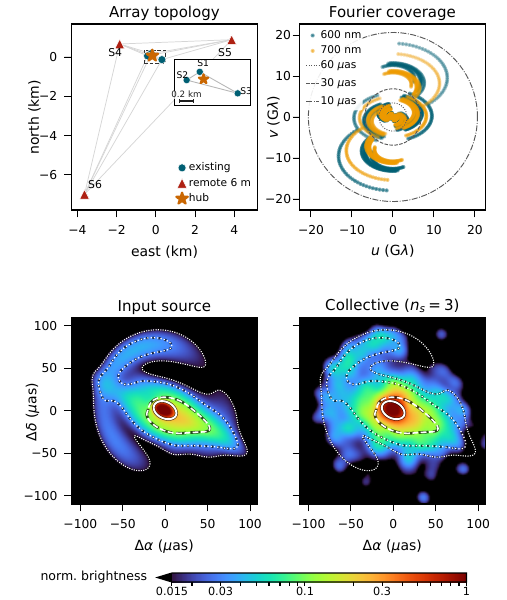}
    \caption{\label{fig:clean}
    {Closure-space imaging test in a six-station long-baseline array. \textit{Top:} Array geometry and Fourier coverage of the 6-station array. \textit{Bottom:} Benchmark targeting source and the RML-reconstructed source with 3-copy collective measurement. Dashed and dotted curves mark
equal-brightness contours of the input source overlaid on both image panels. Detailed pipeline of the RML reconstruction is supplemented in Appendix\,\ref{sec:rml}.}}
\end{figure}

\begin{figure}[t]
    \centering
    \includegraphics[width=\columnwidth]{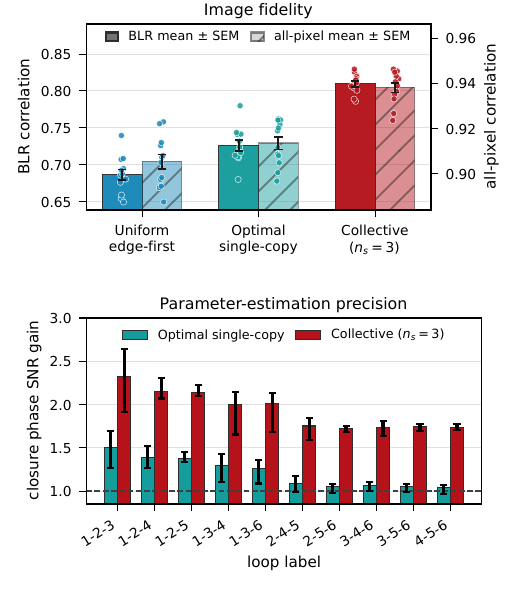}
    \caption{\label{fig:rml_stats}
    {Statistical improvement of collective-measurement-based nonlocal imaging.
\textit{Top:} Comparison of the imaging quality under different measurement schemes. Points
represents the correlation of individual imaging tests, while bars and error bars give the mean and
standard error. \textit{Bottom:} SNR gains for closure phase measurement relative to the
uniform edge-first measurement scheme; bars give geometric means and whiskers indicate
the 5--95\% sample quantiles.}}
\end{figure}

\begin{figure*}[t]
\centering
\includegraphics[width=0.94\textwidth]{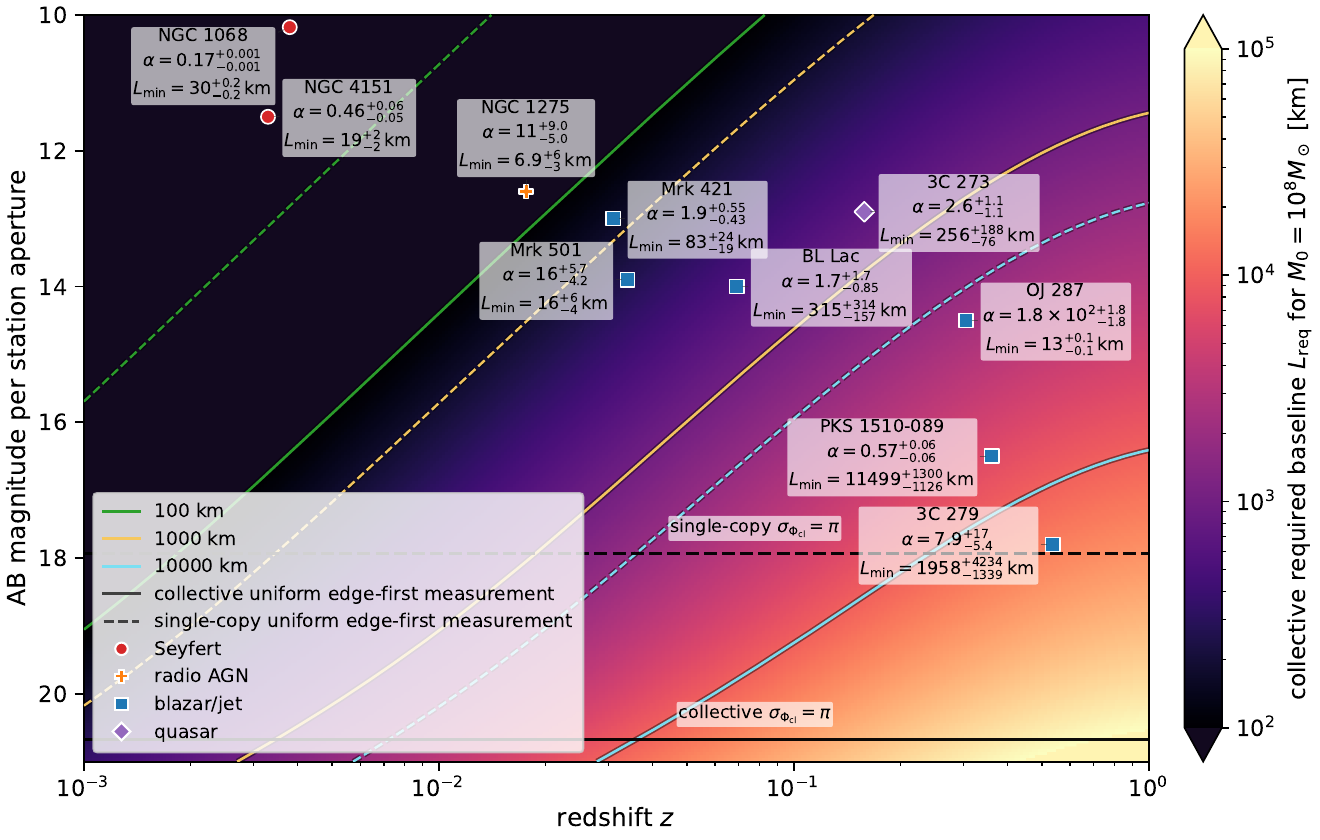} 
\caption{\label{fig:longterm_reach}
Minimum required baseline length for resolving sources at the
Schwarzschild-radius scale. The central black-hole masses adopted for the representative source markers
are taken from Refs.~\cite{Gallimore2024NGC1068,Bentz2006,
Riffel2020NGC1275,Barth2003BLLacMasses,WooUrry2002AGNMasses,
Li2022SARM3C273,Valtonen2012OJ287,Rakshit2020PKS1510,
Nilsson2009ThreeC279}. Solid and dashed black lines denote the unity closure-SNR
thresholds for single-copy measurement and collective measurement,
respectively.  For each readout strategy, the region below the corresponding
threshold is inaccessible because the closure phases cannot be estimated with
sufficient precision.}
\end{figure*}

We then simulate broadband ground-based observations across \(600\)--\(700\,{\rm nm}\) for a simulated source, with a total integration time of \(100\,{\rm ms}\) for each array position. Practically, such broadband observation could be achieved by the multi-band memory scheme proposed in Ref.\,\cite{Khabiboulline2019PRL}. We sample 36 Earth-rotation epochs separated by 15 minutes, and an RML analysis is applied to reconstruct the source image with the obtained samples of visibility amplitudes and closure phases~\cite{Chael2018}. For the benchmark source, we generate a angularly resolved source based on the Seyfert AGN NGC 4151, with average magnitude \(m_V\simeq 11.9\) in the visible band and a crescent-like broad-line-region (BLR) with angular radius of  \(\sim 70\,\mu{\rm as}\)~\cite{Bentz2006,Yuan2020}, as visualized in Fig.~\ref{fig:clean}. Reconstruction quality is scored by both the global image correlation and the correlation restricted to the BLR region  {(defined as the cylindrical region of \(\lvert r-r_{\rm BLR}\rvert<2.2\sigma_{\rm BLR}\), corresponding here to
\(45.6\,\mu{\rm as}<r<98.4\,\mu{\rm as}\))}. Further imaging details are presented in Appendix\,\ref{sec:rml}.

We perform imaging simulations for the uniform edge-first single-copy
measurement, the optimal single-copy measurement, and its promoted collective
counterpart in the three-copy setting. Their statistical reconstruction
performance is compared in Fig.~\ref{fig:rml_stats}, showing a clear improvement
of the collective measurement over both single-copy schemes. Crucially, the
representative reconstruction in Fig.~\ref{fig:clean} demonstrates that even a
three-copy collective measurement can faithfully recover detailed source
morphology, including the asymmetric disk and the extended BLR. Within this benchmark, the enhanced closure-phase precision translates directly into higher imaging quality and improves the ability to resolve characteristic structures in representative astrophysical sources, even in the presence of detector background and incomplete Fourier coverage.

For long-term astrophysical potential, we discuss how 
the sensitivity gain contributes to the fundamental goal of unveiling the evolution dynamics of astrophysical source. The finest scale for astrophysical source is defined by the Schwarzschild radius of central blackhole, $R=2GM/c^2$, corresponding to an angular scale of $\theta_{\rm tar}=R/d$ for remote sources at distance $d$. At the same time, the angular resolution of closure-phase-based interferometric array is given by: 
\begin{equation}
    \theta_{\rm res}=\frac{\sigma_{ \Phi_{\rm cl}}}{|\partial \Phi _{\rm cl}/\partial \theta|}=\frac{1}{\sqrt{nF {}}}\cdot\frac{\lambda}{2\pi L\kappa}\,,
\end{equation}
where $F$ is the Fisher information for closure phase, $\lambda$ is the operating wavelength, $L$ is the size of array, and $\kappa$ is a dimensionless structural factor: for a symmetric loop, $\kappa=0$ for a completely symmetric source distribution, and $\kappa=1$ for a fully asymmetric one like separated point-like binary. The goal-driven requirement of $\theta_{\rm res}\leq \theta_{\rm tar}$ therefore translates into requirement of baseline length via:
\begin{equation}
L\geq \frac{\lambda}{2\pi \kappa {} \sqrt{nF_ {}}}\cdot\frac{c^2d}{2GM}\,.
\end{equation}
Thus, the \(O(\sqrt{N})\) SNR gain in visibility phase readout enters the astrophysical reach in two ways. First, for a fixed baseline length, it extends the maximum distance of resolvable sources by \(O(\sqrt{N})\), corresponding to an \(O(N^{3/2})\) increase in the accessible number of sources. Second, for a specific imaging task targeting a given source at a fixed angular resolution, it reduces the required baseline length by \(O(\sqrt{N})\). 

 {To illustrate the astrophysical reach of collective readout, we first
consider the task of resolving an astrophysical favorable bright quasar 3C~273, with a dynamically inferred black-hole mass of
\(M_\bullet=(2.6\pm1.1)\times10^8M_\odot\)~\cite{Gravity2018}
and an AB magnitude of \(m_{\rm AB}=12.8\). In the numerical estimation, we assume a more futuristic setting of the array
\(N=20\), \(D_{\rm eff}=10\,{\rm m}\), \(\eta=0.2\),
\(\epsilon=10^{-11}\), and \(n=T\Delta f=10^{11}\) at
\(650\,{\rm nm}\), together with
\(\lvert g_{ij}\rvert=\kappa_{\rm cl}=0.5\) and random visibility
phases.  Under these conditions, resolving a Schwarzschild-radius-scale
structure in 3C~273 requires
\(L_{\min}^{\rm sc}\simeq1.39\times10^3\,{\rm km}\) with the
single-copy measurement, but only
\(L_{\min}^{\rm col}\simeq2.56\times10^2\,{\rm km}\) with the
collective measurement at the asymptotic limit $(n_s\to \infty)$.  Collective readout therefore reduces the
required baseline by a factor of approximately \(5.4\), moving the minimum required size of non-local array from the \(10^3\,{\rm km}\) to the \(10^2\,{\rm km}\) regime. }

 {We then extend this state-of-the-art analysis to more astrophysical sources, the result is shown in
Fig.~\ref{fig:longterm_reach}. Taking \(M_0=10^8M_\odot\) as the
reference black-hole mass, we model the minimum baseline \(L_0\)
required to resolve Schwarzschild-radius-scale structure. For an
individual source with \(M_\bullet=\alpha M_0\), the corresponding
baseline rescales as \(L=L_0/\alpha\). For \(N=20\), collective
measurement guarantees a baseline reduction of at least
\(\sqrt{(N-2)/4}\simeq2.1\), while the numerical calculation yields
reductions of approximately \(4.7\)--\(5.4\) over the displayed
magnitude range. Consequently, eight of the ten representative sources
require collective baselines below \(10^3\,{\rm km}\), while even the
fainter high-redshift targets remain within the
\(10^3\)--\(10^4\,{\rm km}\) range of terrestrial networks. The
collective advantage therefore translates directly into a broader
population of astrophysical structures that can be resolved with
Earth-scale interferometric arrays.}

\textit{Conclusion.---} In this work, we discuss the fundamental nature of imaging stars with a multi-station interferometric non-local network as a multi-parameter quantum estimation problem. We find that, all time-resolved repetitive or adaptive measurement suffer from the incompatibility of individual estimators. Such incompatibility can be nearly eliminated by incorporating collective measurement that jointly acts on multiple copies, which can bring an SNR gain of $O(\sqrt{N})$ for jointly obtaining the visibilities at the asymptotic limit, and could offer certain sensitivity enhancement even if $n_s$ is limited.
The enhanced performance for closure phases estimation could be directly converted into both the imaging quality in near-future tasks, or astrophysical potential in more futuristic regime. 

With memory-assisted nonlocal quantum links already demonstrated at the single-baseline level~\cite{Stas2026}, extending them to multi-site quantum networks is a timely and experimentally relevant next step.  {However, most concrete
protocols for multi-site nonlocal quantum interferometry
are formulated as event-by-event readout schemes, while
the station-number scaling enabled by collective temporal measurements has received less attention}~\cite{Gottesman2012,
Khabiboulline2019PRL,Khabiboulline2019PRA,Huang2022,
Marchese2023,WangChitambar2025}. Our work identifies collective measurement as a central design principle and ultimate goal for this transition, allowing favorable sensitivity scaling to be retained as the array grows.  {It thereby provides a quantum metrological framework for examining the transition from
individual links to scalable imaging arrays.}

Finally, we emphasize the implementation challenge of collective measurement. Implementing a general collective measurement on \(n_s\) copies requires much more complex quantum operation than single-copy measurement, with circuit complexity and quantum-resource overhead growing rapidly as \(n_s\) increases and making the scheme increasingly sensitive to control imperfections. However, these demands do not fully eliminate the value of our collective-measurement-based protocol.
First, both experiment and theory are rapidly expanding the range of implementable collective measurements. Experimentally, two-copy and three-copy collective measurements have been demonstrated within optical systems~\cite{Tian2024TwoCopy,Zhou2025ThreeCopy}. Theoretically, complementary proposals seek to approach the many-copy precision limit through engineered many-body interactions~\cite{Tsang2026ManyBody}. Thus, the feasibility of conducting collective measurement improves over time. Second, the sensitivity enhancement enabled by collective measurement is not all-or-nothing. Even when both \(N\) and \(n_s\) are finite and the full asymptotic \(\sqrt{N}\) enhancement is not attained, collective readout can already provide appreciable gains. Progressively incorporating collective measurements into future nonlocal interferometers and increasing the number of jointly measured copies therefore offers a possible, stepwise route toward the goal expressed in our title: imaging stars at the quantum-compatibility limit.

\section*{Acknowledgements}
We thank Yanbei Chen and Yulin Xia for the fruitful discussion about multi-parameter metrology and astrophysical relevance. We acknowledge helpful contributions from OpenAI GPT 5.6 Solar on the formalization and proof of the key metrological bounds (Eqs.7, 13 and 15). X. G. and H. M. are supported by National Natural Science Foundation of China under Grant No. 12441503 and National Key R$\&$D Program of China (2023YFC2205800). 
HY is supported by the Natural Science Foundation of China (Grant 12573048).
\vspace{0.5cm}
\section*{Data availability} 
All the data and source code that support the findings of this article are openly available in the accompanying GitHub repository\,\cite{data}.

\bibliography{references_completed}

@article{Jennison1958,
  author = {Jennison, R. C.},
  title = {A Phase Sensitive Interferometer Technique for the Measurement of the Fourier Transforms of Spatial Brightness Distributions of Small Angular Extent},
  journal = {Mon. Not. R. Astron. Soc.},
  volume = {118},
  pages = {276--284},
  year = {1958},
  doi = {10.1093/mnras/118.3.276}
}

@article{CornwellWilkinson1981,
  author = {Cornwell, T. J. and Wilkinson, P. N.},
  title = {A New Method for Making Maps with Unstable Radio Interferometers},
  journal = {Mon. Not. R. Astron. Soc.},
  volume = {196},
  pages = {1067--1086},
  year = {1981},
  doi = {10.1093/mnras/196.4.1067}
}

@article{Lohmann1983,
  author = {Lohmann, Adolf W. and Weigelt, Gerd and Wirnitzer, Bernhard},
  title = {Speckle Masking in Astronomy: Triple Correlation Theory and Applications},
  journal = {Appl. Opt.},
  volume = {22},
  pages = {4028--4037},
  year = {1983},
  doi = {10.1364/AO.22.004028}
}

@article{Baldwin1986,
  author = {Baldwin, J. E. and Haniff, C. A. and Mackay, C. D. and Warner, P. J.},
  title = {Closure Phase in High-Resolution Optical Imaging},
  journal = {Nature},
  volume = {320},
  pages = {595--597},
  year = {1986},
  doi = {10.1038/320595a0}
}

@article{BaldwinHaniff2002,
  author = {Baldwin, John E. and Haniff, Christopher A.},
  title = {The Application of Interferometry to Optical Astronomical Imaging},
  journal = {Philos. Trans. R. Soc. A},
  volume = {360},
  pages = {969--986},
  year = {2002},
  doi = {10.1098/rsta.2001.0977}
}

@article{Monnier2003,
  author = {Monnier, John D.},
  title = {Optical Interferometry in Astronomy},
  journal = {Rep. Prog. Phys.},
  volume = {66},
  pages = {789--857},
  year = {2003},
  doi = {10.1088/0034-4885/66/5/203}
}

@article{Monnier2007,
  author = {Monnier, John D.},
  title = {Phases in Interferometry},
  journal = {New Astron. Rev.},
  volume = {51},
  pages = {604--616},
  year = {2007},
  doi = {10.1016/j.newar.2007.06.006}
}

@article{Petrov2007,
  author = {Petrov, R. G. and others},
  title = {{AMBER}, the Near-Infrared Spectro-Interferometric Three-Telescope {VLTI} Instrument},
  journal = {Astron. Astrophys.},
  volume = {464},
  pages = {1--12},
  year = {2007},
  doi = {10.1051/0004-6361:20066496}
}

@article{Tatulli2007,
  author = {Tatulli, E. and others},
  title = {Interferometric Data Reduction with {AMBER/VLTI}: Principle, Estimators, and Illustration},
  journal = {Astron. Astrophys.},
  volume = {464},
  pages = {29--42},
  year = {2007},
  doi = {10.1051/0004-6361:20064799}
}

@article{Benisty2009,
  author = {Benisty, M. and Berger, J.-P. and Jocou, L. and Labeye, P. and Malbet, F. and Perraut, K. and Kern, P.},
  title = {An Integrated Optics Beam Combiner for the Second Generation {VLTI} Instruments},
  journal = {Astron. Astrophys.},
  volume = {498},
  pages = {601--613},
  year = {2009},
  doi = {10.1051/0004-6361/200811083}
}

@article{Chelli2009,
  author = {Chelli, A. and Duvert, G. and Malbet, F. and Kern, P.},
  title = {Phase Closure Nulling: Application to the Spectroscopy of Faint Companions},
  journal = {Astron. Astrophys.},
  volume = {498},
  pages = {321--327},
  year = {2009},
  doi = {10.1051/0004-6361/200811036}
}

@article{Lacour2014,
  author = {Lacour, S. and Tuthill, P. and Monnier, J. D. and Kotani, T. and Gauchet, L. and Labeye, P.},
  title = {A New Interferometer Architecture Combining Nulling with Phase Closure Measurements},
  journal = {Mon. Not. R. Astron. Soc.},
  volume = {439},
  pages = {4018--4029},
  year = {2014},
  doi = {10.1093/mnras/stu258}
}

@article{Ceus2011,
  author = {Ceus, Damien and Tonello, Alessandro and Grossard, Ludovic and Delage, Laurent and Reynaud, Fran{\c c}ois and Herrmann, Harald and Sohler, Wolfgang},
  title = {Phase Closure Retrieval in an Infrared-to-Visible Upconversion Interferometer for High Resolution Astronomical Imaging},
  journal = {Opt. Express},
  volume = {19},
  pages = {8616--8624},
  year = {2011},
  doi = {10.1364/OE.19.008616}
}

@article{Ceus2013,
  author = {Ceus, D. and Delage, L. and Grossard, L. and Reynaud, F. and Herrmann, H. and Sohler, W.},
  title = {Contrast and Phase Closure Acquisitions in Photon Counting Regime Using a Frequency Upconversion Interferometer for High Angular Resolution Imaging},
  journal = {Mon. Not. R. Astron. Soc.},
  volume = {430},
  pages = {1529--1537},
  year = {2013},
  doi = {10.1093/mnras/sts654}
}

@article{Thyagarajan2022,
  author = {Thyagarajan, Nithyanandan and Carilli, Chris L.},
  title = {A Geometric View of Closure Phases in Interferometry},
  journal = {Publ. Astron. Soc. Aust.},
  volume = {39},
  pages = {e014},
  year = {2022},
  doi = {10.1017/pasa.2022.6}
}

@article{Chael2018,
  author = {Chael, Andrew A. and Johnson, Michael D. and Bouman, Katherine L. and Blackburn, Lindy L. and Akiyama, Kazunori and Narayan, Ramesh},
  title = {Interferometric Imaging Directly with Closure Phases and Closure Amplitudes},
  journal = {Astrophys. J.},
  volume = {857},
  pages = {23},
  year = {2018},
  doi = {10.3847/1538-4357/aab6a8}
}

@article{BroderickPesce2020,
  author = {Broderick, Avery E. and Pesce, Dominic W.},
  title = {Closure Traces: Novel Calibration-Insensitive Quantities for Radio Astronomy},
  journal = {Astrophys. J.},
  volume = {904},
  pages = {126},
  year = {2020},
  doi = {10.3847/1538-4357/abbd9d}
}

@article{Bentz2006,
  author = {Bentz, Misty C. and others},
  title = {A Reverberation-Based Mass for the Central Black Hole in {NGC 4151}},
  journal = {Astrophys. J.},
  volume = {651},
  pages = {775--781},
  year = {2006},
  doi = {10.1086/507417}
}

@article{Yuan2020,
  author = {Yuan, W. and others},
  title = {The Cepheid Distance to the Seyfert 1 Galaxy {NGC 4151}},
  journal = {Astrophys. J.},
  volume = {902},
  pages = {26},
  year = {2020},
  doi = {10.3847/1538-4357/abb377}
}

@article{Gravity2018,
  author = {{GRAVITY Collaboration}},
  title = {Spatially Resolved Rotation of the Broad-Line Region of a Quasar at Sub-Parsec Scale},
  journal = {Nature},
  volume = {563},
  pages = {657--660},
  year = {2018},
  doi = {10.1038/s41586-018-0731-9}
}

@article{Tsang2011,
  author = {Tsang, Mankei},
  title = {Quantum Nonlocality in Weak-Thermal-Light Interferometry},
  journal = {Phys. Rev. Lett.},
  volume = {107},
  pages = {270402},
  year = {2011},
  doi = {10.1103/PhysRevLett.107.270402}
}

@article{Gottesman2012,
  author = {Gottesman, Daniel and Jennewein, Thomas and Croke, Sarah},
  title = {Longer-Baseline Telescopes Using Quantum Repeaters},
  journal = {Phys. Rev. Lett.},
  volume = {109},
  pages = {070503},
  year = {2012},
  doi = {10.1103/PhysRevLett.109.070503}
}

@article{Khabiboulline2019PRL,
  author = {Khabiboulline, E. T. and Borregaard, J. and De Greve, K. and Lukin, M. D.},
  title = {Optical Interferometry with Quantum Networks},
  journal = {Phys. Rev. Lett.},
  volume = {123},
  pages = {070504},
  year = {2019},
  doi = {10.1103/PhysRevLett.123.070504}
}

@article{Khabiboulline2019PRA,
  author = {Khabiboulline, E. T. and Borregaard, J. and De Greve, K. and Lukin, M. D.},
  title = {Quantum-Assisted Telescope Arrays},
  journal = {Phys. Rev. A},
  volume = {100},
  pages = {022316},
  year = {2019},
  doi = {10.1103/PhysRevA.100.022316}
}

@article{Huang2022,
  author = {Huang, Zixin and Brennen, Gavin K. and Ouyang, Yingkai},
  title = {Imaging Stars with Quantum Error Correction},
  journal = {Phys. Rev. Lett.},
  volume = {129},
  pages = {210502},
  year = {2022},
  doi = {10.1103/PhysRevLett.129.210502}
}

@article{Marchese2023,
  author = {Marchese, Marta Maria and Kok, Pieter},
  title = {Large Baseline Optical Imaging Assisted by Single Photons and Linear Quantum Optics},
  journal = {Phys. Rev. Lett.},
  volume = {130},
  pages = {160801},
  year = {2023},
  doi = {10.1103/PhysRevLett.130.160801}
}

@article{Brown2023,
  author = {Brown, Matthew R. and Allgaier, Markus and Thiel, Val{\'e}rian and Monnier, John D. and Raymer, Michael G. and Smith, Brian J.},
  title = {Interferometric Imaging Using Shared Quantum Entanglement},
  journal = {Phys. Rev. Lett.},
  volume = {131},
  pages = {210801},
  year = {2023},
  doi = {10.1103/PhysRevLett.131.210801}
}

@article{Huang2024,
  author = {Huang, Zixin and Baragiola, Ben Q. and Menicucci, Nicolas C. and Wilde, Mark M.},
  title = {Limited Quantum Advantage for Stellar Interferometry via Continuous-Variable Teleportation},
  journal = {Phys. Rev. A},
  volume = {109},
  pages = {052434},
  year = {2024},
  doi = {10.1103/PhysRevA.109.052434}
}

@article{Wang2025,
  author = {Wang, Yunkai and Zhang, Yujie and Lorenz, Virginia O.},
  title = {Astronomical Interferometry Using Continuous-Variable Quantum Teleportation},
  journal = {Phys. Rev. Res.},
  volume = {7},
  pages = {023154},
  year = {2025},
  doi = {10.1103/PhysRevResearch.7.023154}
}

@article{Stas2026,
  author = {Stas, P.-J. and others},
  title = {Entanglement-Assisted Non-Local Optical Interferometry in a Quantum Network},
  journal = {Nature},
  volume = {651},
  pages = {326--332},
  year = {2026},
  doi = {10.1038/s41586-026-10171-w}
}

@article{BraunsteinCaves1994,
  author = {Braunstein, Samuel L. and Caves, Carlton M.},
  title = {Statistical Distance and the Geometry of Quantum States},
  journal = {Phys. Rev. Lett.},
  volume = {72},
  pages = {3439--3443},
  year = {1994},
  doi = {10.1103/PhysRevLett.72.3439}
}

@article{Ragy2016,
  author = {Ragy, Sammy and Jarzyna, Marcin and Demkowicz-Dobrza{\'n}ski, Rafa{\l}},
  title = {Compatibility in Multiparameter Quantum Metrology},
  journal = {Phys. Rev. A},
  volume = {94},
  pages = {052108},
  year = {2016},
  doi = {10.1103/PhysRevA.94.052108}
}

@article{Yang2019,
  author = {Yang, Jing and Pang, Shengshi and Zhou, Yiyu and Jordan, Andrew N.},
  title = {Optimal Measurements for Quantum Multiparameter Estimation with General States},
  journal = {Phys. Rev. A},
  volume = {100},
  pages = {032104},
  year = {2019},
  doi = {10.1103/PhysRevA.100.032104}
}

@article{Albarelli2019,
  author = {Albarelli, Francesco and Friel, Jamie F. and Datta, Animesh},
  title = {Evaluating the {Holevo Cram{\'e}r--Rao} Bound for Multiparameter Quantum Metrology},
  journal = {Phys. Rev. Lett.},
  volume = {123},
  pages = {200503},
  year = {2019},
  doi = {10.1103/PhysRevLett.123.200503}
}

@article{Yamagata2013,
  author = {Yamagata, Koichi and Fujiwara, Akio and Gill, Richard D.},
  title = {Quantum Local Asymptotic Normality Based on a New Quantum Likelihood Ratio},
  journal = {Ann. Stat.},
  volume = {41},
  pages = {2197--2217},
  year = {2013},
  doi = {10.1214/13-AOS1147}
}

@article{Demkowicz2020,
  author = {Demkowicz-Dobrza{\'n}ski, Rafa{\l} and G{\'o}recki, Wojciech and Guta, Madalin},
  title = {Multi-Parameter Estimation beyond Quantum Fisher Information},
  journal = {J. Phys. A},
  volume = {53},
  pages = {363001},
  year = {2020},
  doi = {10.1088/1751-8121/ab8ef3}
}

@article{Sidhu2021,
  author = {Sidhu, Jasminder S. and Ouyang, Yingkai and Campbell, Earl T. and Kok, Pieter},
  title = {Tight Bounds on the Simultaneous Estimation of Incompatible Parameters},
  journal = {Phys. Rev. X},
  volume = {11},
  pages = {011028},
  year = {2021},
  doi = {10.1103/PhysRevX.11.011028}
}

@article{Imai2026Hierarchy,
  author = {Imai, Satoya and Yang, Jing and Pezz{\`e}, Luca},
  title = {Hierarchy of Saturation Conditions for Multiparameter Quantum Metrology Bounds},
  journal = {arXiv preprint arXiv:2602.12097},
  year = {2026},
  eprint = {2602.12097},
  archivePrefix = {arXiv},
  primaryClass = {quant-ph},
  doi = {10.48550/arXiv.2602.12097}
}

@article{Zhang2026ArraySPADE,
  author = {Zhang, Yujie and Wang, Yunkai and Wu, Wilson and Jennewein, Thomas},
  title = {Quantum-Limited Subdiffraction Telescopy Requires Genuine Multi-Telescope Interference},
  journal = {arXiv preprint arXiv:2606.27276},
  year = {2026},
  eprint = {2606.27276},
  archivePrefix = {arXiv},
  primaryClass = {quant-ph},
  doi = {10.48550/arXiv.2606.27276}
}

@article{GillMassar2000,
  author = {Gill, Richard D. and Massar, Serge},
  title = {State Estimation for Large Ensembles},
  journal = {Phys. Rev. A},
  volume = {61},
  pages = {042312},
  year = {2000},
  doi = {10.1103/PhysRevA.61.042312}
}

@article{WangChitambar2025,
  author = {Wang, Yunkai and Chitambar, Eric},
  title = {Random Distillation Protocols in Long Baseline Telescopy},
  journal = {Phys. Rev. Lett.},
  volume = {134},
  pages = {170801},
  year = {2025},
  doi = {10.1103/PhysRevLett.134.170801}
}

@article{Tian2024TwoCopy,
  author = {Tian, Boxuan and Yan, Wen-Zhe and Hou, Zhibo and Xiang, Guo-Yong and Li, Chuan-Feng and Guo, Guang-Can},
  title = {Minimum-Consumption Discrimination of Quantum States via Globally Optimal Adaptive Measurements},
  journal = {Phys. Rev. Lett.},
  volume = {132},
  pages = {110801},
  year = {2024},
  doi = {10.1103/PhysRevLett.132.110801}
}

@article{Zhou2025ThreeCopy,
  author = {Zhou, Kai and Yi, Changhao and Yan, Wen-Zhe and Hou, Zhibo and Zhu, Huangjun and Xiang, Guo-Yong and Li, Chuan-Feng and Guo, Guang-Can},
  title = {Experimental Realization of Genuine Three-Copy Collective Measurements for Optimal Information Extraction},
  journal = {Phys. Rev. Lett.},
  volume = {134},
  pages = {210201},
  year = {2025},
  doi = {10.1103/PhysRevLett.134.210201}
}

@article{Tsang2026ManyBody,
  author = {Tsang, Mankei},
  title = {Approaching the Ultimate Limit of Quantum Multiparameter Estimation by Many-Body Physics},
  journal = {arXiv preprint arXiv:2603.17955},
  year = {2026},
  eprint = {2603.17955},
  archivePrefix = {arXiv},
  primaryClass = {quant-ph},
  doi = {10.48550/arXiv.2603.17955}
}

@article{Zhou2026Purification,
  author = {Zhou, Sisi},
  title = {Quantum Metrology of Mixed States via Purification},
  journal = {arXiv preprint arXiv:2605.03975},
  year = {2026},
  eprint = {2605.03975},
  archivePrefix = {arXiv},
  primaryClass = {quant-ph},
  doi = {10.48550/arXiv.2605.03975}
}

@article{Wang2026MemoryAssistedInterferometer,
  title = {Memory-Assisted Nonlocal Interferometer toward Long-Baseline Telescopes},
  author = {Wang, Bin and Luo, Xi-Yu and Gao, Bo-Feng and Liu, Jian-Long
            and Wang, Chao-Yang and Yan, Zi and Ke, Qiao-Mu and Teng, Da
            and Zheng, Ming-Yang and Cao, Yuan and Li, Jun and Peng, Cheng-Zhi
            and Zhang, Qiang and Bao, Xiao-Hui and Pan, Jian-Wei},
  journal = {Phys. Rev. Lett.},
  volume = {136},
  pages = {240801},
  year = {2026},
  doi = {10.1103/qpzn-h7p9}
}

@article{CZ,
  title = {Temporally Localized Quantum Operations on Continuous-Wave Thermal Light},
  author = {Wang, Yunkai and Zhang, Yujie and Lorenz, Virginia O.},
  journal = {Phys. Rev. Lett.},
  volume = {135},
  issue = {11},
  pages = {113602},
  numpages = {6},
  year = {2025},
  month = {Sep},
  publisher = {American Physical Society},
  doi = {10.1103/y4v8-1wgm},
  url = {https://link.aps.org/doi/10.1103/y4v8-1wgm}
}

@article{Chen2022Hierarchy,
  author  = {Chen, Hongzhen and Chen, Yu and Yuan, Haidong},
  title   = {Information Geometry under Hierarchical Quantum Measurement},
  journal = {Physical Review Letters},
  volume  = {128},
  number  = {25},
  pages   = {250502},
  year    = {2022},
  doi     = {10.1103/PhysRevLett.128.250502},
  url     = {https://doi.org/10.1103/PhysRevLett.128.250502}
}

@article{Chen2022Incompatibility,
  author  = {Chen, Hongzhen and Chen, Yu and Yuan, Haidong},
  title   = {Incompatibility Measures in Multiparameter Quantum Estimation under Hierarchical Quantum Measurements},
  journal = {Physical Review A},
  volume  = {105},
  number  = {6},
  pages   = {062442},
  year    = {2022},
  doi     = {10.1103/PhysRevA.105.062442},
  url     = {https://doi.org/10.1103/PhysRevA.105.062442}
}

@article{SuzukiYangHayashi2020,
  author = {Suzuki, Jun and Yang, Yuxiang and Hayashi, Masahito},
  title = {Quantum State Estimation with Nuisance Parameters},
  journal = {J. Phys. A: Math. Theor.},
  volume = {53},
  pages = {453001},
  year = {2020},
  doi = {10.1088/1751-8121/ab8b78}
}

@article{DemkowiczGoreckiGuta2020,
  author = {Demkowicz-Dobrza{\'n}ski, Rafa{\l} and G{\'o}recki, Wojciech
            and Gu{\c t}{\u a}, M{\u a}d{\u a}lin},
  title = {Multi-Parameter Estimation beyond Quantum Fisher Information},
  journal = {J. Phys. A: Math. Theor.},
  volume = {53},
  pages = {363001},
  year = {2020},
  doi = {10.1088/1751-8121/ab8ef3}
}

@article{Conlon2023Collective,
  author = {Conlon, Lorc{\'a}n O. and Vogl, Tobias and Marciniak, Christian D.
            and Pogorelov, Ivan and Yung, Simon K. and Eilenberger, Falk
            and Berry, Dominic W. and Santana, Fabiana S. and Blatt, Rainer
            and Monz, Thomas and Lam, Ping Koy and Assad, Syed M.},
  title = {Approaching Optimal Entangling Collective Measurements on Quantum
           Computing Platforms},
  journal = {Nat. Phys.},
  volume = {19},
  pages = {351--357},
  year = {2023},
  doi = {10.1038/s41567-022-01875-7}
}

@book{Helstrom1976,
  author = {Helstrom, Carl W.},
  title = {Quantum Detection and Estimation Theory},
  publisher = {Academic Press},
  address = {New York},
  year = {1976},
  isbn = {978-0-12-340050-5}
}

@article{Rao1945,
  author = {Rao, C. Radhakrishna},
  title = {Information and Accuracy Attainable in the Estimation of
           Statistical Parameters},
  journal = {Bull. Calcutta Math. Soc.},
  volume = {37},
  pages = {81--91},
  year = {1945}
}

@article{Gallimore2024NGC1068,
  author = {Gallimore, Jack F. and Impellizzeri, C. M. Violette and Aghelpasand, Samaneh and Gao, Feng and Hostetter, Virginia and Lankhaar, Boy},
  title = {The Discovery of Polarized Water Vapor Megamaser Emission in a Molecular Accretion Disk},
  journal = {Astrophys. J. Lett.},
  volume = {975},
  pages = {L9},
  year = {2024},
  doi = {10.3847/2041-8213/ad864f}
}

@article{Riffel2020NGC1275,
  author = {Riffel, Rog{\'e}rio A. and others},
  title = {Ionized and Hot Molecular Outflows in the Inner 500 pc of {NGC 1275}},
  journal = {Mon. Not. R. Astron. Soc.},
  volume = {496},
  pages = {4857--4873},
  year = {2020},
  doi = {10.1093/mnras/staa1922}
}

@article{Barth2003BLLacMasses,
  author = {Barth, Aaron J. and Ho, Luis C. and Sargent, Wallace L. W.},
  title = {The Black Hole Masses and Host Galaxies of {BL Lac} Objects},
  journal = {Astrophys. J.},
  volume = {583},
  pages = {134--144},
  year = {2003},
  doi = {10.1086/345083}
}

@article{WooUrry2002AGNMasses,
  author = {Woo, Jong-Hak and Urry, C. Megan},
  title = {Active Galactic Nucleus Black Hole Masses and Bolometric Luminosities},
  journal = {Astrophys. J.},
  volume = {579},
  pages = {530--544},
  year = {2002},
  doi = {10.1086/342878}
}

@article{Li2022SARM3C273,
  author = {Li, Yan-Rong and Wang, Jian-Min and Songsheng, Yu-Yang and Zhang, Zhi-Xiang and Du, Pu and Hu, Chen and Xiao, Ming},
  title = {Spectroastrometry and Reverberation Mapping: The Mass and Geometric Distance of the Supermassive Black Hole in the Quasar {3C 273}},
  journal = {Astrophys. J.},
  volume = {927},
  pages = {58},
  year = {2022},
  doi = {10.3847/1538-4357/ac4bcb}
}

@article{Valtonen2012OJ287,
  author = {Valtonen, M. J. and Ciprini, S. and Lehto, H. J.},
  title = {On the Masses of {OJ287} Black Holes},
  journal = {Mon. Not. R. Astron. Soc.},
  volume = {427},
  pages = {77--83},
  year = {2012},
  doi = {10.1111/j.1365-2966.2012.21861.x}
}

@article{Rakshit2020PKS1510,
  author = {Rakshit, Suvendu},
  title = {Broad Line Region and Black Hole Mass of {PKS 1510-089} from Spectroscopic Reverberation Mapping},
  journal = {Astron. Astrophys.},
  volume = {642},
  pages = {A59},
  year = {2020},
  doi = {10.1051/0004-6361/202038324}
}

@article{Nilsson2009ThreeC279,
  author = {Nilsson, K. and Pursimo, T. and Villforth, C. and Lindfors, E. and Takalo, L. O.},
  title = {The Host Galaxy of {3C 279}},
  journal = {Astron. Astrophys.},
  volume = {505},
  pages = {601--604},
  year = {2009},
  doi = {10.1051/0004-6361/200912820}
}

@article{KahnGuta2009,
  author  = {Kahn, Jonas and Gu{\c{t}}{\u{a}}, M{\u{a}}d{\u{a}}lin},
  title   = {Local Asymptotic Normality for Finite Dimensional Quantum Systems},
  journal = {Communications in Mathematical Physics},
  volume  = {289},
  number  = {2},
  pages   = {597--652},
  year    = {2009},
  doi     = {10.1007/s00220-009-0787-3}
}

@article{YangChiribellaHayashi2019,
  author  = {Yang, Yuxiang and Chiribella, Giulio and Hayashi, Masahito},
  title   = {Attaining the Ultimate Precision Limit in Quantum State Estimation},
  journal = {Communications in Mathematical Physics},
  volume  = {368},
  number  = {1},
  pages   = {223--293},
  year    = {2019},
  doi     = {10.1007/s00220-019-03433-4}
}

@article{HolevoWerner2001,
  author = {Holevo, A. S. and Werner, R. F.},
  title = {Evaluating Capacities of Bosonic Gaussian Channels},
  journal = {Phys. Rev. A},
  volume = {63},
  pages = {032312},
  year = {2001},
  doi = {10.1103/PhysRevA.63.032312}
}

@article{Weedbrook2012,
  author = {Weedbrook, Christian and Pirandola, Stefano and
            Garc{\'i}a-Patr{\'o}n, Ra{\'u}l and Cerf, Nicolas J. and
            Ralph, Timothy C. and Shapiro, Jeffrey H. and Lloyd, Seth},
  title = {Gaussian Quantum Information},
  journal = {Rev. Mod. Phys.},
  volume = {84},
  pages = {621--669},
  year = {2012},
  doi = {10.1103/RevModPhys.84.621}
}

@article{Lacour2019FringeTracker,
  author = {Lacour, S. and Dembet, R. and Abuter, R. and
            F{\'e}dou, P. and Perrin, G. and Choquet, {\'E}. and
            Pfuhl, O. and Eisenhauer, F. and Woillez, J. and
            Cassaing, F. and others},
  title = {The {GRAVITY} Fringe Tracker},
  journal = {Astron. Astrophys.},
  volume = {624},
  pages = {A99},
  year = {2019},
  doi = {10.1051/0004-6361/201834981}
}

@article{Landman2024MagAOX,
  author = {Landman, R. and Haffert, S. Y. and Males, J. R. and
            Close, L. M. and Foster, W. B. and Van Gorkom, K. and
            Guyon, O. and Hedglen, A. and Kautz, M. and
            Kueny, J. K. and others},
  title = {Making the Unmodulated Pyramid Wavefront Sensor Smart:
           Closed-Loop Demonstration of Neural-Network Wavefront
           Reconstruction with {MagAO-X}},
  journal = {Astron. Astrophys.},
  volume = {684},
  pages = {A114},
  year = {2024},
  doi = {10.1051/0004-6361/202348898}
}

@article{Kornilov2012,
  author = {Kornilov, V.},
  title = {Stellar Scintillation on Large and Extremely Large Telescopes},
  journal = {Mon. Not. R. Astron. Soc.},
  volume = {426},
  pages = {647--655},
  year = {2012},
  doi = {10.1111/j.1365-2966.2012.21653.x}
}

@misc{data,
  author       = {Guo, Xinyao and Miao, Haixing and Cai, Zheng and Yang, Huan},
  title        = {Data and Source Code for ``Imaging Stars at the Quantum Compatibility Limit''},
  year         = {2026},
  publisher    = {GitHub},
  howpublished = {\url{https://github.com/xinyaoguo2024/imaging-stars-at-the-quantum-compatibility-limit}},
  url          = {https://github.com/xinyaoguo2024/imaging-stars-at-the-quantum-compatibility-limit}
}
\onecolumngrid
\clearpage
\twocolumngrid

\appendix

\makeatletter
\@addtoreset{equation}{section}
\@addtoreset{figure}{section}
\@addtoreset{table}{section}
\makeatother

\renewcommand{\theequation}{\Alph{section}\arabic{equation}}
\renewcommand{\thefigure}{\Alph{section}\arabic{figure}}
\renewcommand{\thetable}{\Alph{section}\arabic{table}}

\renewcommand{\theHsection}{appendix.\Alph{section}}
\renewcommand{\theHequation}{appendix.\Alph{section}.\arabic{equation}}
\renewcommand{\theHfigure}{appendix.\Alph{section}.\arabic{figure}}
\renewcommand{\theHtable}{appendix.\Alph{section}.\arabic{table}}

\newcommand{\rank}{\operatorname{rank}}
\newcommand{\ran}{\operatorname{ran}}
\newcommand{\I}{\mathbb I}
\newcommand{\col}{\mathrm{col}}
\newcommand{\rep}{\mathrm{rep}}
\newcommand{\dd}{\mathrm{d}}
\newcommand{\op}{\mathrm{op}}

\newtheorem{theorem}{Theorem}

\newtheorem{remark}[theorem]{Remark}

\setcounter{secnumdepth}{2}

The appendices are organized as follows:  Appendix~\ref{sec:bounds}
proves the two information bounds used in the main text.  Appendix~\ref{sec:finite}
formulates finite-copy receiver design and gives an explicit \(N=4\) example. Appendix~\ref{sec:noise_model} elaborates on the physical motivation for the loss model
and the use of closure-phase analysis in the main text.
Appendix~\ref{sec:rml} records the regularized maximum-likelihood (RML) imaging
pipeline and its twelve-seed statistical test.

\section{Derivation of Information bounds for repetitive and collective measurement}
\label{sec:bounds}

\subsection{Supplemental definition}

As a preliminary, we firstly provide a more exact definition of the POVM we consider here. For an \(N\)-station array, it contains \(E=N(N-1)/2\) distinct baselines, whose visibility phases define an \(E\)-dimensional edge-phase vector \(\boldsymbol{\phi}\). A general multi-parameter measurement is described by a POVM \(\boldsymbol{\Pi}=\{\Pi_x\}\), which maps the signal-containing density matrix \(\rho_{\boldsymbol{\phi}}^{(1)}\) onto an outcome distribution:
\begin{equation}
\Pi_x \succeq 0,\qquad
\sum_x \Pi_x=\mathbb{I}_N,\qquad
p_x(\boldsymbol{\phi})
=\operatorname{Tr}\!\left[
\rho_{\boldsymbol{\phi}}^{(1)}\Pi_x
\right]\,.
\end{equation}
And, the joint parameter estimation capability attach to the POVM is described by the CFIM $J(\Pi)$, defined as:
\begin{equation}
[J(\boldsymbol{\Pi})]_{ef}=\sum_x p_x \cdot \partial_{\phi_e} lnp_x\cdot \partial_{\phi_f} lnp_x\,.
\end{equation}

We now specify the receiver-dependent quantities used in the second bound. Fix a working point \(\boldsymbol\theta_0\) and a single-copy POVM \(\Pi=\{\Pi_x\}\), with outcome probability \(p_x(\boldsymbol\theta)=\Tr(\rho_{\boldsymbol\theta}\Pi_x)\). On its \(r'\)-dimensional estimable support, define
\begin{equation}
 \begin{aligned}
 s_a(x)&=\left.\partial_a\ln p_x(\boldsymbol\theta)\right|_{\boldsymbol\theta_0},\\
 J_{ab}(\Pi)&=\sum_xp_xs_a(x)s_b(x),\\
 X_a^\Pi&=\sum_{x,b}[J(\Pi)^{-1}]_{ab}s_b(x)\Pi_x .
 \end{aligned}
 \label{eq:end_score}
\end{equation}
The classical score \(s_a(x)\) is the local response of outcome \(x\) to parameter \(\theta_a\). It acts as a signed meter response in likelihood space: \(s_a(x)>0\) means that outcome \(x\) becomes more likely when \(\theta_a\) increases, while \(s_a(x)<0\) favors the opposite displacement, and \(|s_a(x)|\) quantifies the strength of this evidence. The scores average to zero over repeated trials, while their covariance gives the CFIM \(J(\Pi)\). The operator \(X_a^\Pi\) maps the corresponding efficient influence \(J^{-1}\boldsymbol s\) back to the signal Hilbert space. It is locally unbiased, \(\Tr(\rho X_a^\Pi)=0\) and \(\Tr[(\partial_b\rho)X_a^\Pi]=\delta_{ab}\). Its intrinsic complex covariance defines
\begin{equation}
 \begin{aligned}
 (Z_\Pi)_{ab}&=\Tr(\rho X_a^\Pi X_b^\Pi)
 =(A_\Pi)_{ab}+i(B_\Pi)_{ab},\\
 (B_\Pi)_{ab}&=\frac{1}{2i}\Tr\rho[X_a^\Pi,X_b^\Pi].
 \end{aligned}
 \label{eq:end_APi}
\end{equation}
Thus \(A_\Pi=\Re Z_\Pi\succeq0\) is the intrinsic covariance of the operator-valued score associated with \(\Pi\), before that information is irreversibly converted into the classical outcome label \(x\). Its diagonal entries quantify the irreducible fluctuation of each induced estimator, while its off-diagonal entries describe correlated fluctuations between parameter directions. By contrast, the classical covariance \(J(\Pi)^{-1}\) also contains the noise introduced by the final readout, with \(J(\Pi)^{-1}-A_\Pi\succeq0\); coherent promotion acts on the intrinsic part encoded by \(A_\Pi\). The imaginary part \(B_\Pi\) records the residual incompatibility of the same induced estimators. In the second boxed bound, \(\langle\cdot\rangle_{A_\Pi}\) denotes a uniform angular average after whitening by this metric, with \(\boldsymbol u'_\ell=A_\Pi^{1/2}\boldsymbol c_\ell/\sqrt{\boldsymbol c_\ell^\top A_\Pi\boldsymbol c_\ell}\).

For comparison, the Holevo bound for a positive cost matrix \(W\) is~\cite{Albarelli2019,Demkowicz2020}
\begin{equation}
 \begin{aligned}
 C_{\rm H}(W)=\min_{\{Y_a\}}\big\{
 &\Tr[W\Re Z(Y)]\\
 &+\|W^{1/2}\Im Z(Y)W^{1/2}\|_1\big\},\\[-2pt]
 &Z_{ab}(Y)=\Tr(\rho Y_aY_b),
 \end{aligned}
 \label{eq:end_Holevo}
\end{equation}
where the Hermitian \(Y_a\) obey \(\Tr(\rho Y_a)=0\) and \(\Tr[(\partial_b\rho)Y_a]=\delta_{ab}\). The \(A_\Pi\)-isotropic geometry naturally induces the weight \(W_\Pi=A_\Pi^{-1}\), with all inverses understood on the estimable support. The promotion jointly reads the additive fluctuations \(\mathbb F_{n_s}(X_a^\Pi)=n_s^{-1/2}\sum_{m=1}^{n_s}(X_a^\Pi)^{(m)}\), whose local Gaussian limit has covariance \(Z_\Pi\). Let \(K_\Pi=A_\Pi^{-1/2}(iB_\Pi)A_\Pi^{-1/2}\). The optimal covariant readout of this induced Gaussian score experiment has covariance
\begin{equation}
 \begin{aligned}
 V_\Pi^{\rm col}&=A_\Pi+A_\Pi^{1/2}|K_\Pi|A_\Pi^{1/2},\\
 \Tr(W_\Pi V_\Pi^{\rm col})&=r'+\Tr|K_\Pi|
 =C_{\rm H}^{(\Pi)}(W_\Pi).
 \end{aligned}
 \label{eq:end_saturation}
\end{equation}
The promoted collective measurement considered here is the QLAN pullback of this covariant readout to \(\rho^{\otimes n_s}\)~\cite{Yamagata2013}; hence it saturates the \(A_\Pi\)-induced Holevo cost in the asymptotic limit. It is therefore the Holevo-optimal joint realization naturally paired with the original single-copy receiver \(\Pi\); optimizing \(\Pi\) itself is a separate outer optimization over receiver designs.

\subsection{Proof of Eq.(7): independent-copy information capacity}

Let \(J^Q\) be the QFIM of rank \(r\), and let \(J(\Pi)\) be the CFIM of
an arbitrary one-copy POVM.  The first result is
\begin{equation}
 \left\langle\frac{F_\ell(\Pi)}{F_\ell^Q}\right\rangle_Q
 \leq \min\!\left\{1,\frac{N-1}{r}\right\}.
 \label{eq:gm_average_bound}
\end{equation}
Here the average is uniform over
\begin{equation}
 \bm u_\ell=
 \frac{(J^Q)^{-1/2}\bm c_\ell}
 {\sqrt{\bm c_\ell^{\mathsf T}(J^Q)^{-1}\bm c_\ell}}
 \in S^{r-1},
 \label{eq:qfi_sphere}
\end{equation}
and all inverses are restricted to \(\ran J^Q\).

To prove Eq.~\eqref{eq:gm_average_bound}, introduce the QFI-whitened CFIM
\(M=(J^Q)^{-1/2}J(\Pi)(J^Q)^{-1/2}\).  The Braunstein--Caves and Gill--Massar
inequalities give, respectively~\cite{BraunsteinCaves1994,GillMassar2000},
\begin{equation}
 0\preceq M\preceq I_r,
 \qquad
 \Tr M=\Tr[(J^Q)^+J(\Pi)]\leq N-1.
 \label{eq:gm_trace}
\end{equation}
For nonsingular \(M\), matrix Cauchy--Schwarz implies
\begin{equation}
 \frac{F_\ell(\Pi)}{F_\ell^Q}
 =\frac{1}{\bm u_\ell^{\mathsf T}M^{-1}\bm u_\ell}
 \leq \bm u_\ell^{\mathsf T}M\bm u_\ell\,.
\end{equation}
Since
\(\langle\bm u_\ell\bm u_\ell^{\mathsf T}\rangle=I_r/r\), spherical
averaging and Eq.~\eqref{eq:gm_trace} yield the second term in
Eq.~\eqref{eq:gm_average_bound}; the pointwise quantum
Cram\'er--Rao inequality supplies the bound of unity. For singular
\(M\), \(M^{-1}\) is understood as the Moore--Penrose pseudoinverse
\(M^{+}\) on \(\operatorname{ran}M\), while \(F_\ell(\Pi)\) is set to
zero whenever \(\bm u_\ell\notin\operatorname{ran}M\). Restricting the
Cauchy--Schwarz argument to \(\operatorname{ran}M\) then yields the same
inequality. The result also holds per copy for classically adaptive
independent-copy protocols: each conditional CFIM obeys
Eq.~\eqref{eq:gm_trace}, and the conditional information matrices add
by the Fisher-information chain rule.

\subsection{Proof of Eq.(13) : receiver-paired collective gain}

Fix a regular one-copy POVM \(\Pi=\{\Pi_x\}\), let
\(p_x=\Tr(\rho\Pi_x)>0\), and define its classical score and CFIM by
\begin{equation}
 s_a(x)=\partial_a\ln p_x,
 \qquad
 J_{ab}(\Pi)=\sum_xp_xs_a(x)s_b(x).
 \label{eq:score_cfim}
\end{equation}
On the \(r'=\rank J(\Pi)\)-dimensional support, the efficient score operators
and their intrinsic covariance are
\begin{align}
 X_a^\Pi&=\sum_{x,b}[J(\Pi)^{-1}]_{ab}s_b(x)\Pi_x,\notag\\
 (Z_\Pi)_{ab}&=\Tr(\rho X_a^\Pi X_b^\Pi)
 =(A_\Pi+iB_\Pi)_{ab}.
 \label{eq:score_operators}
\end{align}
The classical score records how sensitively each outcome probability changes,
whereas \(X_a^\Pi\) is the corresponding operator-valued, locally unbiased
influence.  Thus \(A_\Pi\) is the irreducible covariance already carried by
the receiver's compressed quantum fluctuations, while \(B_\Pi\) records their
residual noncommutativity.

The first ingredient is the score-compression capacity
\begin{equation}
 \Tr[A_\Pi J(\Pi)]\leq N-1.
 \label{eq:score_capacity}
\end{equation}
For completeness, define the unital positive map
\(Tf=\sum_xf(x)\Pi_x\) and whiten the score as
\(\bm y(x)=J(\Pi)^{-1/2}\bm s(x)\).  Then
\(\{1,y_1,\ldots,y_{r'}\}\) is orthonormal in \(L^2(p)\).  Extend it to an
orthonormal basis \(\{f_k\}\).  Parseval's identity and
\(\Pi_x^2\preceq(\Tr\Pi_x)\Pi_x\) give
\begin{align}
 \Tr[A_\Pi J(\Pi)]
 &=\sum_{a=1}^{r'}\|Ty_a\|_\rho^2 \notag\\
 &\leq\sum_x\frac{\Tr(\rho\Pi_x^2)}{p_x}-1
 \notag\\
 &\leq\sum_x\Tr\Pi_x-1=N-1,
 \label{eq:parseval_proof}
\end{align}
where \(\|Y\|_\rho^2=\operatorname{Re}\Tr(\rho Y^2)\).  Kadison's
inequality also gives \(A_\Pi\preceq J(\Pi)^{-1}\).  Hence, for
\begin{equation}
 M_\Pi=A_\Pi^{1/2}J(\Pi)A_\Pi^{1/2},
 \qquad k=\min\{r',N-1\},
 \label{eq:Mpi}
\end{equation}
one has \(0\prec M_\Pi\preceq I_{r'}\) and \(\Tr M_\Pi\leq k\).

For \(n_s\) copies, consider the additive score fluctuations
\begin{equation}
 \mathbb F_{n_s}(X_a^\Pi)
 =\frac{1}{\sqrt{n_s}}\sum_{m=1}^{n_s}(X_a^\Pi)^{(m)}.
 \label{eq:additive_score}
\end{equation}
Under quantum local asymptotic normality (QLAN), these observables converge
to a Gaussian shift model with intrinsic covariance \(Z_\Pi\)
\cite{Yamagata2013,Albarelli2019,Demkowicz2020}.  With
\(K_\Pi=A_\Pi^{-1/2}(iB_\Pi)A_\Pi^{-1/2}\), positivity of
\(A_\Pi\pm iB_\Pi\) implies \(|K_\Pi|\preceq I_{r'}\).  The optimal
covariant readout of this \(\Pi\)-induced Gaussian score experiment has
covariance
\begin{equation}
 V_\Pi=A_\Pi^{1/2}(I_{r'}+|K_\Pi|)A_\Pi^{1/2},
 \qquad A_\Pi\preceq V_\Pi\preceq2A_\Pi.
 \label{eq:gaussian_covariance}
\end{equation}
For the natural Holevo weight \(W_\Pi=A_\Pi^{-1}\), its cost is
\(\Tr(W_\Pi V_\Pi)=r'+\Tr|K_\Pi|\).  Provided the QLAN pullback is
differentiable in quadratic mean, it produces collective POVMs with per-copy
CFIM
\begin{equation}
 J^{\rm col}(\Pi)\equiv
 \lim_{n_s\rightarrow\infty}
 \frac{J_{\rm col}^{(n_s)}(\Pi)}{n_s}=V_\Pi^{-1}.
 \label{eq:per_copy_col}
\end{equation}
This is Holevo optimal for the Gaussian score experiment induced by \(\Pi\);
it is not, without a separate outer optimization, a claim of global Holevo
optimality for the complete source-state model.

Finally set
\begin{equation}
 \bm u'_\ell=
 \frac{A_\Pi^{1/2}\bm c_\ell}
 {\sqrt{\bm c_\ell^{\mathsf T}A_\Pi\bm c_\ell}},
 \qquad R_\Pi=I_{r'}+|K_\Pi|.
\end{equation}
The receiver-paired directional gain is exactly
\begin{equation}
\begin{aligned}
 \frac{F_\ell^{\rm col}(\Pi)}{F_\ell(\Pi)}
 &\equiv\frac{F_\ell[J^{\rm col}(\Pi)]}{F_\ell[J(\Pi)]}\\
 &=\frac{\bm u_\ell'^{\mathsf T}M_\Pi^{-1}\bm u'_\ell}
 {\bm u_\ell'^{\mathsf T}R_\Pi\bm u'_\ell}\\
 &\geq
 \frac{1}{(\bm u_\ell'^{\mathsf T}M_\Pi\bm u'_\ell)
 (\bm u_\ell'^{\mathsf T}R_\Pi\bm u'_\ell)}.
\end{aligned}
 \label{eq:directional_gain}
\end{equation}
The functions \(1/(xy)\) and \(1/\sqrt{xy}\) are jointly convex for
\(x,y>0\).  Jensen's inequality, \(\Tr M_\Pi\leq k\), and
\(\Tr R_\Pi\leq2r'\) therefore give
\begin{align}
 \left\langle
 \frac{F_\ell^{\rm col}(\Pi)}{F_\ell(\Pi)}
 \right\rangle_{A_\Pi}
 &\geq\frac{r'^2}{\Tr M_\Pi\Tr R_\Pi}
 \geq\frac{r'}{2k}\geq\frac{r'}{2(N-1)},\notag\\
 \left\langle
 \frac{\operatorname{SNR}_\ell^{\rm col}(\Pi)}
 {\operatorname{SNR}_\ell(\Pi)}
 \right\rangle_{A_\Pi}
 &\geq\frac{r'}{\sqrt{\Tr M_\Pi\Tr R_\Pi}}
 \geq\sqrt{\frac{r'}{2(N-1)}}.
 \label{eq:paired_average_bounds}
\end{align}
For a full-edge receiver, \(r'=N(N-1)/2\), yielding the universal lower
bounds \(N/4\) in Fisher information and \(\sqrt N/2\) in SNR.  These are
local, asymptotic QLAN statements; finite-\(n_s\) performance must be computed
from an explicit joint POVM.

\subsection{  {state-of-the-art performance of finite-copy collective measurement}}
We use the score operators \(X_a^\Pi\), their intrinsic covariance \(A_\Pi\), and the
asymptotic promoted covariance \(V_{\Pi,\infty}\) defined in the End Matter. The index
\(a=1,\ldots,r\) runs over the identifiable real degrees of freedom of all off-diagonal
entries of \(\rho^{(1)}\). No closure-space projection is made here; hence
\begin{equation}
 r\leq 2E=N(N-1),
 \label{eq:sm_rank}
\end{equation}
with \(r\leq E=N(N-1)/2\) if only one quadrature of each edge coherence is estimated.
We first assume that the observation supplies enough photons to repeat an \(n_s\)-copy
receiver many times. Finite photon number is restored only at the end.

The asymptotic construction replaces the additive score fluctuations by a quantum Gaussian
shift model. At finite \(n_s\), however, the commutator of two collective scores is still an
operator-valued sample average rather than its ensemble mean. We show first that this
residual noncommutativity decreases as \(n_s^{-1/2}\), and then transfer that estimate to the
same \(A_\Pi\)-weighted directional Fisher ratio used in the main text.

Whiten the single-copy influence operators in their intrinsic metric,
\begin{equation}
 Y_a=\sum_b(A_\Pi^{-1/2})_{ab}X_b^\Pi,
 \label{eq:sm_whitened_score}
\end{equation}
and define their normalized block fluctuations
\begin{equation}
 \mathbb F_{n_s}(Y_a)=\frac{1}{\sqrt{n_s}}
 \sum_{m=1}^{n_s}Y_a^{(m)}.
 \label{eq:sm_fluctuation}
\end{equation}
Here \(Y_a^{(m)}\) acts on the \(m\)th copy. Operators belonging to different copies
commute, so
\begin{align}
 [\mathbb F_{n_s}(Y_a),\mathbb F_{n_s}(Y_b)]
 &=\frac{1}{n_s}\sum_{m,m'=1}^{n_s}[Y_a^{(m)},Y_b^{(m')}]
 \notag\\
 &=\frac{1}{n_s}\sum_{m=1}^{n_s}C_{ab}^{(m)},
 \qquad C_{ab}\equiv[Y_a,Y_b].
 \label{eq:sm_commutator_sum}
\end{align}
The Gaussian limit keeps only
\(\bar C_{ab}=\Tr(\rho C_{ab})I\). The finite-block remainder is therefore
\begin{equation}
 \Delta C_{ab}^{(n_s)}=
 \frac1{n_s}\sum_{m=1}^{n_s}
 \left(C_{ab}^{(m)}-\bar C_{ab}\right).
 \label{eq:sm_delta_comm}
\end{equation}

\begin{lemma}[Exact self-averaging law]
With \(\|Z\|_{2,\rho}^2=\Tr(\rho Z^\dagger Z)\),
\begin{equation}
 \|\Delta C_{ab}^{(n_s)}\|_{2,\rho^{\otimes n_s}}^2
 =\frac1{n_s}\|C_{ab}-\bar C_{ab}\|_{2,\rho}^2.
 \label{eq:sm_exact_pair}
\end{equation}
\end{lemma}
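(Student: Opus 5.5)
The identity is a variance computation for a sum of independent, centered, operator-valued terms. Set \(D\equiv C_{ab}-\bar C_{ab}\) on a single copy, so that \(\Delta C_{ab}^{(n_s)}=n_s^{-1}\sum_m D^{(m)}\) with \(D^{(m)}\) acting on the \(m\)th tensor factor. Two facts drive the calculation. First, \(D\) is centered: \(\Tr(\rho D)=\Tr(\rho C_{ab})-\Tr(\rho C_{ab})\Tr\rho=0\), by the definition \(\bar C_{ab}=\Tr(\rho C_{ab})I\). Second, \(\rho^{\otimes n_s}\) is a product state, so expectations of products of operators on different factors factorize.

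The plan is to expand the norm directly:
\begin{equation}
\|\Delta C_{ab}^{(n_s)}\|_{2,\rho^{\otimes n_s}}^2=\frac{1}{n_s^2}\sum_{m,m'=1}^{n_s}\Tr\!\left(\rho^{\otimes n_s}D^{(m)\dagger}D^{(m')}\right).
\end{equation}
The sum splits into diagonal and off-diagonal terms.

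\emph{Off-diagonal terms} (\(m\neq m'\)). The operators act on distinct factors, and the remaining factors contribute \(\Tr\rho=1\). Hence each term equals \(\overline{\Tr(\rho D)}\,\Tr(\rho D)=|\Tr(\rho D)|^2=0\) by centering. Note that we use \(\Tr(\rho D^\dagger)=\overline{\Tr(\rho D)}\), since \(\rho\) is Hermitian.

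\emph{Diagonal terms} (\(m=m'\)). Each reduces to the single-copy quantity \(\Tr(\rho D^\dagger D)=\|C_{ab}-\bar C_{ab}\|_{2,\rho}^2\). There are \(n_s\) of these, so the total is \(n_s\cdot n_s^{-2}\) times that norm, which gives Eq.~\eqref{eq:sm_exact_pair}.

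There is no genuine obstacle. The only point needing care is that \(C_{ab}\) is anti-Hermitian, so \(D^\dagger\neq D\) and we must keep the adjoint and the complex conjugate in the cross terms. These vanish anyway, because centering kills each factor separately.

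It is also worth checking that the cross-copy commutators in Eq.~\eqref{eq:sm_commutator_sum} really drop out, so that \(\Delta C_{ab}^{(n_s)}\) contains only same-copy terms. This holds because \(Y_a^{(m)}\) and \(Y_b^{(m')}\) act on different tensor factors and therefore commute for \(m\neq m'\).

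Finally, I would remark that \(\|C_{ab}-\bar C_{ab}\|_{2,\rho}^2=\Tr(\rho C_{ab}^\dagger C_{ab})-|\Tr(\rho C_{ab})|^2\). This quantity is finite and independent of \(n_s\), and it is the quantity that sets the \(n_s^{-1/2}\) prefactor used afterwards.
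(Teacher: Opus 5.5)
Your proposal is correct and follows the paper's proof essentially step for step. Like the paper, you set \(\delta C_{ab}=C_{ab}-\bar C_{ab}\) and expand the \(\rho^{\otimes n_s}\)-norm as a double sum. The cross terms factorize into \(\Tr(\rho\,\delta C_{ab}^\dagger)\Tr(\rho\,\delta C_{ab})=0\), and the \(n_s\) equal diagonal terms divided by \(n_s^2\) give the \(1/n_s\) law.
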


\begin{proof}
Writing \(\delta C_{ab}=C_{ab}-\bar C_{ab}\) and expanding the norm gives
\begin{align}
 \|\Delta C_{ab}^{(n_s)}\|_{2,\rho^{\otimes n_s}}^2
 &=\frac1{n_s^2}\sum_{m,m'=1}^{n_s}
 \Tr\!\left[\rho^{\otimes n_s}
 (\delta C_{ab}^{(m)})^\dagger\delta C_{ab}^{(m')}\right].
 \label{eq:sm_norm_expansion}
\end{align}
For \(m\neq m'\), the product state factorizes the trace into
\(\Tr(\rho\delta C_{ab}^\dagger)\Tr(\rho\delta C_{ab})=0\). The \(n_s\)
diagonal terms are all equal to \(\Tr(\rho\delta C_{ab}^\dagger\delta C_{ab})\).
Thus Eq.~\eqref{eq:sm_norm_expansion} contains \(n_s\) surviving terms divided by
\(n_s^2\), which proves Eq.~\eqref{eq:sm_exact_pair}.
\end{proof}

To collect all estimated directions without introducing a worst-case edge pair, define
\begin{equation}
 \kappa_\Pi^2\equiv
 \frac{2}{r(r-1)}\sum_{a<b}
 \|C_{ab}-\bar C_{ab}\|_{2,\rho}^2.
 \label{eq:sm_kappa_definition}
\end{equation}
This is not an additional matrix inequality: it is the root-mean-square strength of the
centered commutator for a uniformly selected pair of \(A_\Pi\)-normalized score directions.
Its physical meaning is direct. If \(\kappa_\Pi=0\), the relevant score operators commute
at the operator level and finite blocks carry no residual incompatibility. If
\(\kappa_\Pi=O(1)\), a typical noncommuting pair has a finite single-copy fluctuation, but
that fluctuation does not grow with the array size after whitening. A large
\(\kappa_\Pi\) signals either unusually strong score noncommutativity or an ill-conditioned
working point at which \(A_\Pi^{-1/2}\) greatly amplifies one direction.

The aggregate residual per estimated coordinate is
\begin{align}
 \epsilon_{n_s,\Pi}^2
 &\equiv\frac1r\sum_{a<b}
 \|\Delta C_{ab}^{(n_s)}\|_{2,\rho^{\otimes n_s}}^2
 \notag\\
 &=\frac1{rn_s}\sum_{a<b}
 \|C_{ab}-\bar C_{ab}\|_{2,\rho}^2
 =\frac{r-1}{2n_s}\kappa_\Pi^2,
 \label{eq:sm_epsilon_calculation}
\end{align}
or
\begin{equation}
 \epsilon_{n_s,\Pi}=\kappa_\Pi
 \sqrt{\frac{r-1}{2n_s}}.
 \label{eq:sm_epsilon_result}
\end{equation}
For the full edge family, \(\sqrt r=O(N)\), giving the conservative scale
\(O(N/\sqrt{n_s})\). The estimate can be tighter when the score algebra is sparse. For
example, if each normalized edge score fails to commute with only \(\Delta\) other scores,
then the same calculation contains only \(O(r\Delta)\) nonzero pairs and gives
\(\epsilon_{n_s,\Pi}=O(\sqrt{\Delta/n_s})\). The displayed \(N/\sqrt{n_s}\) law is
therefore an array-wide upper scaling, not a claim that every source and receiver saturates it.

The self-averaging law isolates the array-size dependence in \(r\) and the local
state dependence in \(\kappa_\Pi\). To make the latter distinction concrete, we now relate
\(\kappa_\Pi\) to the astronomical visibility. In general, \(\kappa_\Pi\) depends on the
complete visibility matrix, not only on a single contrast parameter.
Equation~\eqref{eq:sm_kappa_definition} can be written as
\begin{equation}
 \kappa_\Pi^2(\rho)=\frac{2}{r(r-1)}\sum_{a<b}
 \left\{\Tr(\rho C_{ab}^\dagger C_{ab})
 -|\Tr(\rho C_{ab})|^2\right\},
 \label{eq:sm_kappa_state}
\end{equation}
where both \(C_{ab}=[Y_a,Y_b]\) and the expectation value depend on
\(g_{ij}\). The first dependence enters through the whitening
\(Y=A_\Pi^{-1/2}X^\Pi\), because \(A_\Pi\) changes with the source; the second enters
directly through the state average in Eq.~\eqref{eq:sm_kappa_state}. Thus there is no
receiver-independent function \(\kappa(g)\) for an arbitrary source and POVM.

The dependence becomes explicit for the symmetric imaging benchmark used throughout the
scaling discussion. Take equal station populations and a common real visibility
\(0<g<1\),
\begin{equation}
 \rho_g=\frac{1-g}{N}I_N+g|b\rangle\langle b|,
 \qquad |b\rangle=\frac1{\sqrt N}\sum_i|i\rangle,
 \label{eq:sm_symmetric_state}
\end{equation}
and use the uniform edge-first phase receiver at its quadrature working point. Orient each
edge \(e=(ij)\), put
\(Q_e=i(|i\rangle\langle j|-|j\rangle\langle i|)\), and let \(B\) be the oriented
vertex--edge incidence matrix. A direct evaluation of the single-edge score gives
\begin{equation}
 X_e^\Pi=\frac{N}{2g}Q_e,
 \qquad
 A_\Pi=\frac{N}{4g^2}
 \left[2(1-g)I_E+gB^{\mathsf T}B\right].
 \label{eq:sm_symmetric_A}
\end{equation}
For completeness, the diagonal term follows from
\(\Tr(\rho_gQ_e^2)=2/N\). For two oriented edges, the cross term is proportional to their
incidence overlap,
\(\Tr(\rho_gQ_eQ_f)=g(B^{\mathsf T}B)_{ef}/N\) for \(e\neq f\), which yields the second
relation in Eq.~\eqref{eq:sm_symmetric_A}.

The incidence matrix separates the edge space into an \((N-1)\)-dimensional station-gradient
sector and an \([E-(N-1)]\)-dimensional divergence-free sector. On the latter,
\(B\bm q=0\), so Eq.~\eqref{eq:sm_symmetric_A} has eigenvalue
\begin{equation}
 a_{\rm dark}=\frac{N(1-g)}{2g^2}.
 \label{eq:sm_dark_variance}
\end{equation}
Consequently, the whitened score in this dominant \(O(N^2)\)-dimensional sector carries the
factor
\begin{equation}
 Y_\mu=\sqrt{\frac{N}{2(1-g)}}\,T_\mu,
 \label{eq:sm_dark_whitening}
\end{equation}
where the \(T_\mu\) are unit antisymmetric generators on the \((N-1)\)-dimensional dark-mode
space. Each generator fails to commute with \(2(N-3)\) others. For every nonzero pair,
the commutator is another unit generator and
\begin{equation}
 \|[Y_\mu,Y_\nu]\|_{2,\rho_g}^2
 =\frac{N}{2(1-g)}.
 \label{eq:sm_nonzero_pair_g}
\end{equation}
The fraction of noncommuting unordered pairs is \(4/N\). Substitution into
Eq.~\eqref{eq:sm_kappa_definition} therefore gives, exactly within this dominant sector,
\begin{equation}
\kappa_\Pi^2=\frac{2}{1-g},\qquad
 \kappa_\Pi=\sqrt{\frac{2}{1-g}}.
 \label{eq:sm_kappa_g}
\end{equation}
The remaining station-gradient sector contains only \(O(N)\) of the \(O(N^2)\) edge
directions. Including it changes the spherical average only at relative order \(1/N\), so
for the complete edge space
\begin{equation}
 \kappa_{\Pi,\rm edge}^2
 =\frac{2}{1-g}\left[1+O(N^{-1})\right]
 \label{eq:sm_kappa_full_edge}
\end{equation}
for this symmetric receiver and working point.

The factor \((1-g)^{-1/2}\) has a simple physical origin. As \(g\to1\), the one-photon
state approaches a bright pure mode, while the intrinsic variance of the orthogonal edge
combinations in Eq.~\eqref{eq:sm_dark_variance} vanishes as \(1-g\). Normalizing those
directions to unit intrinsic variance therefore amplifies their residual commutators by
\((1-g)^{-1/2}\). For the value \(g=0.5\) used in the numerical benchmark,
Eq.~\eqref{eq:sm_kappa_g} gives \(\kappa_\Pi=2\), an order-unity constant. Near unity,
however, the finite-depth correction is more accurately written as
\begin{equation}
 O\!\left(\frac{N}{\sqrt{n_s(1-g)}}\right),
 \qquad n_s(1-g)\gtrsim N^2
 \label{eq:sm_g_corrected_scaling}
\end{equation}
for the onset of the asymptotic regime. At the opposite endpoint \(g\to0\), the normalized
commutator factor remains finite, but the phase Fisher information itself vanishes; the phase
coordinate is not identifiable at \(g=0\).

The calculation above identifies the operator fluctuation that remains after a finite
number of copies have been mixed. We next transfer this algebraic residual to the same
directional Fisher metric used in the main text. Let \(J_{\Pi,n_s}^{\rm col}\) be the per-copy
CFIM of the finite-depth promoted receiver and write its covariance relative to the
asymptotic promoted covariance as
\begin{equation}
 [J_{\Pi,n_s}^{\rm col}]^{-1}
 =V_{\Pi,\infty}+A_\Pi^{1/2}D_{\Pi,n_s}A_\Pi^{1/2}.
 \label{eq:sm_covariance_remainder}
\end{equation}
The dimensionless matrix \(D_{\Pi,n_s}\) is the readout error expressed in the same intrinsic
coordinates as Eq.~\eqref{eq:sm_whitened_score}. A smooth implementation of the Gaussian
covariant receiver has a first-order response
\begin{equation}
 \frac1r\Tr|D_{\Pi,n_s}|
 \leq L_\Pi\epsilon_{n_s,\Pi}
 +O(\epsilon_{n_s,\Pi}^2).
 \label{eq:sm_receiver_stability}
\end{equation}
Here \(L_\Pi\) is the susceptibility of the joint readout to a small residual in its score
algebra. Physically, Eq.~\eqref{eq:sm_receiver_stability} excludes a circuit tuned exactly to
a singular threshold, where an infinitesimal change of the collective score moments would
produce a finite change of the outcome likelihood derivatives. Away from such thresholds,
both the POVM probabilities and their derivatives vary smoothly, so \(L_\Pi\) remains finite.
Thus \(\kappa_\Pi\) describes the fluctuation supplied by the imaging model, whereas
\(L_\Pi\) describes how strongly the chosen receiver converts that fluctuation into estimation
noise. Qualitative QLAN alone guarantees neither a finite value uniform in \(N\) nor a Fisher
rate; Eq.~\eqref{eq:sm_receiver_stability} is the explicit derivative-level regularity needed
for the latter \cite{KahnGuta2009,Yamagata2013,YangChiribellaHayashi2019}.

We now derive the directional bound. Define, as in the main text,
\begin{align}
 \bm u_\ell&=\frac{A_\Pi^{1/2}\bm c_\ell}
 {\sqrt{\bm c_\ell^{\mathsf T}A_\Pi\bm c_\ell}},\notag\\
 R_\Pi&=A_\Pi^{-1/2}V_{\Pi,\infty}A_\Pi^{-1/2}
 =I+|K_\Pi|.
 \label{eq:sm_u_R}
\end{align}
Since \(I\preceq R_\Pi\preceq2I\), Eqs.~\eqref{eq:sm_covariance_remainder} and
\eqref{eq:sm_u_R} give
\begin{align}
 \bm c_\ell^{\mathsf T}[J_{\Pi,n_s}^{\rm col}]^{-1}\bm c_\ell
 &=\left(\bm c_\ell^{\mathsf T}A_\Pi\bm c_\ell\right)
 \bm u_\ell^{\mathsf T}(R_\Pi+D_{\Pi,n_s})\bm u_\ell,
 \notag\\
 \bm c_\ell^{\mathsf T}V_{\Pi,\infty}\bm c_\ell
 &=\left(\bm c_\ell^{\mathsf T}A_\Pi\bm c_\ell\right)
 \bm u_\ell^{\mathsf T}R_\Pi\bm u_\ell.
 \label{eq:sm_two_variances}
\end{align}
Using \(F_\ell[J]=(\bm c_\ell^{\mathsf T}J^{-1}\bm c_\ell)^{-1}\), their ratio is
therefore exactly
\begin{equation}
 \frac{F_{\ell,\infty}^{\rm col}(\Pi)}
 {F_{\ell,n_s}^{\rm col}(\Pi)}
 =1+\frac{\bm u_\ell^{\mathsf T}D_{\Pi,n_s}\bm u_\ell}
 {\bm u_\ell^{\mathsf T}R_\Pi\bm u_\ell}.
 \label{eq:sm_exact_ratio}
\end{equation}
Because the denominator is at least one,
\begin{equation}
 \delta_{n_s}\equiv
 \left\langle\frac{F_{\ell,\infty}^{\rm col}}
 {F_{\ell,n_s}^{\rm col}}\right\rangle_{A_\Pi}-1
 \label{eq:sm_delta_definition}
\end{equation}
obeys
\begin{align}
 |\delta_{n_s}|
 &\leq\int_{S^{r-1}}\dd\bm u\,
 |\bm u^{\mathsf T}D_{\Pi,n_s}\bm u|
 \notag\\
 &\leq\int_{S^{r-1}}\dd\bm u\,
 \bm u^{\mathsf T}|D_{\Pi,n_s}|\bm u
 =\frac1r\Tr|D_{\Pi,n_s}|.
 \label{eq:sm_spherical_calculation}
\end{align}
The last equality follows from
\(\int_{S^{r-1}}\dd\bm u\,\bm u\bm u^{\mathsf T}=I/r\). Substitution of
Eqs.~\eqref{eq:sm_epsilon_result} and \eqref{eq:sm_receiver_stability} proves
\begin{equation}
 \boxed{
 \left\langle\frac{F_{\ell,\infty}^{\rm col}(\Pi)}
 {F_{\ell,n_s}^{\rm col}(\Pi)}\right\rangle_{A_\Pi}
 =1+O\!\left(\sqrt{\frac{r}{n_s}}\right)
 =1+O\!\left(\frac{N}{\sqrt{n_s}}\right).}
 \label{eq:sm_main_result}
\end{equation}
More explicitly,
\begin{equation}
 \left|\left\langle\frac{F_{\ell,\infty}^{\rm col}}
 {F_{\ell,n_s}^{\rm col}}\right\rangle_{A_\Pi}-1\right|
 \leq L_\Pi\kappa_\Pi\sqrt{\frac{r-1}{2n_s}}
 +O\!\left(\frac{r}{n_s}\right).
 \label{eq:sm_explicit_result}
\end{equation}
The constants may depend on the local source state and on \(\Pi\). Uniform scaling with
\(N\) requires a regular sequence of operating points for which the relevant eigenvalues and
the conditioning of \(A_\Pi\) remain controlled. This is the natural imaging analogue of the
regularity conditions used in quantitative local asymptotic normality.

The resulting correction also determines when the asymptotic collective advantage becomes
accessible. For a nondegenerate edge family, the main result of the main text gives an asymptotic
collective-to-single-copy Fisher gain \(G_F(\infty)=O(N)\). Combining it with
Eq.~\eqref{eq:sm_main_result} gives the finite-depth scaling envelope
\begin{equation}
 G_F(n_s)\sim\frac{O(N)}{1+O(N/\sqrt{n_s})}.
 \label{eq:sm_gain_envelope}
\end{equation}
It identifies \(n_s^\star\sim N^2\) as the conservative characteristic depth. When
\(1\ll n_s\lesssim N^2\), the finite-depth correction can dominate, yielding
\(G_F(n_s)=O(\sqrt{n_s})\) and hence an SNR enhancement
\(G_{\rm SNR}(n_s)=O(n_s^{1/4})\). For \(n_s\gtrsim N^2\), the correction becomes order
unity or smaller and the receiver can approach the asymptotic
\(G_{\rm SNR}=O(\sqrt N)\) scaling.

Finally, let \(n_\gamma\) be the number of detected, parameter-bearing one-photon events.
Grouping them into blocks gives \(B=\lfloor n_\gamma/n_s\rfloor\) independent joint
outcomes. Because their CFIMs add,
\begin{equation}
 F_{\ell,n_s}^{\rm tot}=n_\gamma F_{\ell,n_s}^{\rm col}
 \left[1+O\!\left(\frac{n_s}{n_\gamma}\right)\right].
 \label{eq:sm_finite_photons}
\end{equation}
Finite brightness therefore changes the number of available repetitions, not the intrinsic
ratio in Eq.~\eqref{eq:sm_main_result}. The collective-depth analysis applies in the window
\(1\ll n_s\ll n_\gamma\), with \(n_s\gtrsim N^2\) additionally required only when one aims
to approach the asymptotic \(\sqrt N\) enhancement.

\section{Concrete design of a Finite-copy collective measurement}
\label{sec:finite}

\subsection{semidefinite-optimization workflow}

Let \(\rho_{\bm\phi}\) be the informative one-photon state and define the
local block coordinate \(\bm h=\sqrt{n_s}(\bm\phi-\bm\phi_0)\).  At the known
working point,
\begin{align}
 \sigma&=\rho_0^{\otimes n_s},\notag\\
 \dot\sigma_e&=\frac{1}{\sqrt{n_s}}
 \sum_{m=1}^{n_s}\rho_0^{\otimes(m-1)}\otimes
 (\partial_e\rho)_0\otimes\rho_0^{\otimes(n_s-m)}.
 \label{eq:block_tangents}
\end{align}
For a joint POVM \(\{M_y\}\), define
\begin{equation}
 p_y=\Tr(\sigma M_y),\qquad
 \bm d_y=\Tr(\dot{\bm\sigma}M_y),\qquad
 J=\sum_y\frac{\bm d_y\bm d_y^{\mathsf T}}{p_y}.
 \label{eq:block_cfim}
\end{equation}
Because of the \(\sqrt{n_s}\) rescaling, \(J\) is the CFIM per input copy.
We use the A-optimal risk \(\mathcal R_W=\Tr(WJ^{-1})\), with \(W=I\) in the
example below.  Thus a numerical design run requires only
\((N,n_s,\rho_0,\{(\partial_e\rho)_0\},W)\), together with any imposed receiver
symmetry or pairing constraint.  Its output is an explicit list of effects
\(\{M_y\}\), not only a value of the bound: Eq.~\eqref{eq:block_cfim} gives
every outcome probability, likelihood derivative, likelihood score
\(\bm\ell_y=\bm d_y/p_y\), and the achieved CFIM.

For fixed candidate likelihood-score labels \(\bm\ell_y\), the effect optimization is one
convex semidefinite program.  Introducing \(G\) and an epigraph \(V\), solve
\begin{equation}
\begin{aligned}
 \underset{\{M_y,p_y\},G,V}{\operatorname{minimize}}\quad
 &\Tr(WV)\\
 \text{subject to}\quad
 &M_y\succeq0,\quad \sum_yM_y=I,\quad p_y=\Tr(\sigma M_y),\\
 &\Tr(\dot\sigma_eM_y)=p_y \ell_{y,e},\quad
 G=\sum_yp_y\bm\ell_y\bm\ell_y^{\mathsf T},\\
 &G\succeq\gamma I,\qquad
 \begin{pmatrix}V&I\\I&G\end{pmatrix}\succeq0.
 \label{eq:fixed_score_sdp}
\end{aligned}
\end{equation}
At the optimum \(G\) is the CFIM and the Schur complement makes
\(V\succeq G^{-1}\).  If the receiver must remain strictly paired with a
given one-copy POVM, its promoted influence operators
\(F_e^{(n_s)}=n_s^{-1/2}\sum_m(X_e^\Pi)^{(m)}\) are retained by adding
\begin{equation}
 \sum_y \ell_{y,e}M_y=\sum_fG_{ef}F_f^{(n_s)}.
 \label{eq:strict_pairing}
\end{equation}

All coefficients in Eqs.~\eqref{eq:fixed_score_sdp} and
\eqref{eq:strict_pairing} are known matrices at the supplied working point.
Consequently, the inner step can be passed directly to a standard SDP solver:
positivity, completeness, Born probabilities, local response, and the A-risk
epigraph are imposed simultaneously.  The returned matrices are the joint
effects in the chosen basis.  This step is globally optimal for the supplied
score frame; it does not yet optimize over all physically allowed scores.

The complete design is a column-generation loop around
Eq.~\eqref{eq:fixed_score_sdp}, because the self-consistent scores
\(\bm\ell_y=\bm d_y/p_y\) depend on the effects.  A run uses the following
operational sequence.

\begin{enumerate}
 \item \textit{Build the local statistical model.---}Construct \(\sigma\) and
 all \(\dot\sigma_e\) from Eq.~\eqref{eq:block_tangents}.  Remove null
 directions of the tangent Gram matrix and express \(W\) on the remaining
 estimable support; this prevents an artificial singularity in \(J^{-1}\).

 \item \textit{Reduce by copy symmetry.---}Transform \(\sigma\) and
 \(\dot\sigma_e\) into copy-permutation Schur blocks.  Only the reduced
 matrices are optimization variables; multiplicity spaces are included as
 known weights.  In the two-copy \(N=4\) problem, one \(16\times16\) effect is
 thereby replaced by independent \(10\times10\) and \(6\times6\) blocks.

 \item \textit{Initialize with an exactly feasible receiver.---}For a
 receiver paired to a one-copy POVM \(\Pi=\{\Pi_x\}\), start from the
 coarse-grained product POVM \(\Pi^{\otimes n_s}\).  A block record
 \(\bm x=(x_1,\ldots,x_{n_s})\) has the known score
 \begin{equation}
  \bm\ell_{\bm x}^{(0)}=\frac{1}{\sqrt{n_s}}
  \sum_{m=1}^{n_s}\bm\ell_{x_m}.
  \label{eq:product_score}
 \end{equation}
 Outcomes related by permuting the copies are merged without changing
 completeness.  Thus 12 one-copy outcomes give 78 unordered labels at
 \(n_s=2\), rather than 144 ordered records.  For an unrestricted search the
 same product receiver is a convenient seed, but
 Eq.~\eqref{eq:strict_pairing} is omitted.

 \item \textit{Optimize the current score frame.---}Warm-start and solve
 Eq.~\eqref{eq:fixed_score_sdp}.  Recompute \(p_y,\bm d_y,J\) directly from
 the returned effects, update every active score to \(\bm d_y/p_y\), and
 prune negligible effects only after their probabilities and constraint
 residuals have been recorded.

 \item \textit{Price new collective outcomes.---}Use the least-informative
 generalized eigenvectors of \((J,W)\), the coordinate axes, and random trial
 directions \(\bm u\).  For each direction, solve
 \begin{equation}
  \Bigl(\sum_eu_e\dot\sigma_e\Bigr)|\psi\rangle
  =\lambda\sigma|\psi\rangle,\quad
  \ell_e(\psi)=\frac{\langle\psi|\dot\sigma_e|\psi\rangle}
  {\langle\psi|\sigma|\psi\rangle}.
  \label{eq:score_pricing_operational}
 \end{equation}
 The extremal generalized eigenvectors give boundary points of the physical
 score set.  Append these labels with zero initial effects and resolve the
 SDP.  The preceding solution remains feasible, so the optimized risk cannot
 increase.

 \item \textit{Stop, reconstruct, and validate.---}Iterate until the relative
 risk decrease, SDP primal--dual gap, and best tested pricing improvement are
 below the chosen tolerances.  Reconstruct the full-space effects and report
 \begin{equation}
  \epsilon_{\rm comp}=\left\|\sum_yM_y-I\right\|_{\rm F},\quad
  \epsilon_J=\left\|J-\sum_y\frac{\bm d_y\bm d_y^{\mathsf T}}{p_y}
  \right\|_{\rm F},
  \label{eq:numerical_residuals}
 \end{equation}
 together with the minimum \(p_y\), the smallest positivity eigenvalue, and
 copy-permutation commutators.  We also verify \(J\preceq J^Q\) per copy.
 These checks distinguish a realizable receiver from a favorable but
 infeasible Fisher matrix.
\end{enumerate}

In the accompanying implementation, the convex subproblem is written in
CVXPY and solved by SCS with default accuracy \(5\times10^{-6}\) and at most
\(10^5\) iterations.  The direct PVM driver exposes the random seed, number
of Haar restarts, and iteration cap; the supplied drivers default to 20
restarts and \(1800\)--\(2000\) descent steps.  A probability cutoff of
\(10^{-8}\) is used only to
flag inactive outcomes.  All quoted probabilities and Fisher matrices are
recomputed from the unpruned saved effects.

Every effect update in this loop is a certified convex optimization.  Unless
the pricing problem is exhausted over the complete score body, however, the
outer iteration remains a restricted/local search rather than a global
unrestricted-POVM certificate.  The saved deliverables are the effects,
probabilities, derivatives, scores, CFIM, optimization history, and all
validation residuals; they are sufficient to reproduce both the likelihood
and the receiver.

\subsection{Permutation reduction and physical receiver representation}

The stored temporal records are allowed to interfere coherently before any
one of them is detected.  Since \(\sigma\), \(\dot\sigma_e\), and
\(F_e^{(n_s)}\) are invariant under copy permutations, every effect can first
be twirled and decomposed as
\begin{equation}
 M_y=\bigoplus_\lambda M_y^{(\lambda)}\otimes I_{m_\lambda}.
 \label{eq:schur_effect}
\end{equation}
For \(N=4\) and \(n_s=2\), the Schur transform separates the
\(4\times4=16\) two-record amplitudes into a ten-dimensional
exchange-symmetric sector, spanned by \(\lvert ii\rangle\) and
\((\lvert ij\rangle+\lvert ji\rangle)/\sqrt2\), and a six-dimensional
exchange-antisymmetric sector, spanned by
\((\lvert ij\rangle-\lvert ji\rangle)/\sqrt2\).  These are exchange
symmetries of two addressable temporal records, not telescope labels or
accepted and rejected events.  The product state \(\rho_0^{\otimes2}\)
generally has weight in both sectors, and both carry phase response.  At
\(n_s=3\), the reduced dimensions are \(20\oplus20\oplus4\), with
multiplicities \(1,2,1\).  The symmetry reduction therefore preserves all
available information while making the optimization tractable.

When a sharp receiver is preferred, parameterize one basis in every Schur
block by a unitary \(U_\lambda\).  Then
\begin{align}
 M_{\lambda a}&=U_\lambda|a\rangle\!\langle a|U_\lambda^\dagger
 \otimes I_{m_\lambda},\notag\\
 p_{\lambda a}&=m_\lambda
 [U_\lambda^\dagger\sigma_\lambda U_\lambda]_{aa},\qquad
 d_{\lambda a,e}=m_\lambda
 [U_\lambda^\dagger\dot\sigma_{\lambda,e}U_\lambda]_{aa}.
 \label{eq:block_pvm_statistics}
\end{align}
Each column of \(U_\lambda\) is one collective output mode: a coherent
superposition of station patterns from several records within a fixed
exchange sector.  The optimizer chooses their relative amplitudes and phases
so that the resulting count probabilities respond to all estimated edge
phases as evenly and independently as possible.  Minimizing
\(\Tr(WJ^{-1})\) strongly penalizes a blind or weak direction instead of
rewarding only the most sensitive port.  The PVM remains deterministic and
complete: every two-copy block exits through one of the 16 ports, so no
postselection probability multiplies its Fisher information.

The objective and its gradient are therefore evaluated without forming the
full \(N^{n_s}\)-dimensional effects.  Starting from independent Haar-random
block bases, we take a Riemannian descent step, use an Armijo line search,
and restore unitarity by QR retraction.  Multiple restarts are ranked by
\(\Tr(WJ^{-1})\), and the best basis is exported with
\(\bm p,\bm d,J\) and the residuals in Eq.~\eqref{eq:numerical_residuals}.
This direct PVM branch is nonconvex but especially transparent; it is the
branch used for the explicit \(n_s=2\) receiver below.  To enlarge the search
from projective measurements to overcomplete POVMs, we replace each
\(d_\lambda\times d_\lambda\) block unitary by a Parseval frame
\(V_\lambda\in\mathbb C^{d_\lambda\times qd_\lambda}\), satisfying
\(V_\lambda V_\lambda^\dagger=I_{d_\lambda}\).  Its columns
\(\{|v_{\lambda a}\rangle\}\) define the effects
\(|v_{\lambda a}\rangle\!\langle v_{\lambda a}|\otimes I_{m_\lambda}\).
The \(q=2\) searches reported below therefore have \(32\) outcomes at
\(n_s=2\) and \(88\) outcomes at \(n_s=3\).  We optimize the same A-risk and
restore the Parseval constraint by a row-isometry retraction after each step.

The numerical output also specifies a circuit.  Apply the Schur transform,
the block-controlled basis change \(U_\lambda^\dagger\), and then resolve the
block label and output port.  For \(N=4,n_s=2\), this is a 16-dimensional
joint unitary followed by two identical four-outcome computational-basis
detections, exactly matching the joint-processing plus single-copy-readout
architecture of the main text.  A general SDP POVM is compiled by factorizing
\(M_y=R_y^\dagger R_y\), stacking the \(R_y\) into a Naimark isometry,
completing that isometry to a unitary, and measuring its outcome register.
The optimization therefore returns either the intermediate unitary directly
(PVM route) or the data needed to synthesize its dilated version (POVM route).

\subsection{\texorpdfstring{Worked example: \(N=4\) and \(n_s=2\)}
{Worked example: N=4 and ns=2}}

We use the full-rank complex working state
\begin{equation}
 \rho_0=\frac14
 \begin{pmatrix}
 1 & \frac12 & \frac12e^{i\pi/12} & \frac12e^{i\pi/6}\\
 \frac12 & 1 & \frac12e^{i\pi/6} & \frac12e^{i\pi/12}\\
 \frac12e^{-i\pi/12} & \frac12e^{-i\pi/6} & 1 & \frac12e^{-i\pi/6}\\
 \frac12e^{-i\pi/6} & \frac12e^{-i\pi/12} & \frac12e^{i\pi/6} & 1
 \end{pmatrix}.
 \label{eq:n4_state}
\end{equation}
Thus \(|g_{ij}|=0.5\) on every edge, whereas the edge-phase vector in the
order \((12,13,14,23,24,34)\) is
\((0,\pi/12,\pi/6,\pi/6,\pi/12,-\pi/6)\). The four triangle phases are
\((\pi/12,-\pi/12,-\pi/4,-\pi/12)\), so station phases cannot transform
Eq.~\eqref{eq:n4_state} into a real matrix.
We estimate all six edge phases while holding the visibility magnitudes
fixed. The locally phase-matched repetitive uniform-edge-first POVM has
\begin{equation}
 J_{\rep}=I_6/24,\qquad \mathcal R_{\rep}=\Tr J_{\rep}^{-1}=144.
 \label{eq:n4_repetitive}
\end{equation}

For a completely explicit receiver, we use the Schur-projective route in
Eq.~\eqref{eq:block_pvm_statistics}. Positivity and completeness then hold
analytically throughout the search. This is a constructive optimization
within the projective-Schur subclass, rather than a proof of globally optimal
performance over unrestricted POVMs.

To specify the collective receiver explicitly, we order the product basis as
\(\mathcal B_{\rm prod}=(|11\rangle,|12\rangle,\ldots,|44\rangle)\) and use
the two Schur-sector bases
\begin{align}
 \mathcal B_+={}&
 \bigl(|11\rangle,|22\rangle,|33\rangle,|44\rangle,\notag\\[-2pt]
 &\hspace{1.1em}|12\rangle_+,|13\rangle_+,|14\rangle_+,
 |23\rangle_+,|24\rangle_+,|34\rangle_+\bigr),\notag\\
 \mathcal B_-={}&
 \bigl(|12\rangle_-,|13\rangle_-,|14\rangle_-,\notag\\[-2pt]
 &\hspace{1.1em}|23\rangle_-,|24\rangle_-,|34\rangle_-\bigr),
 \label{eq:n4_schur_bases}
\end{align}
where
\(|ij\rangle_\pm=(|ij\rangle\pm|ji\rangle)/\sqrt2\).
Let \(S_+\) and \(S_-\) denote the matrices whose columns are the vectors in
\(\mathcal B_+\) and \(\mathcal B_-\), respectively, written in
\(\mathcal B_{\rm prod}\), and define the \(16\times16\) Schur matrix
\(S=(S_+,S_-)\). The optimized variables are a \(10\times10\) unitary
\(U_+\) and a \(6\times6\) unitary \(U_-\). In the original two-copy basis,
the complete collective-mode matrix and the unitary actually applied before
detection are therefore
\begin{equation}
 U_{\rm coll}
 =S\begin{pmatrix}U_+&0\\0&U_-\end{pmatrix},
 \qquad
 V_{\rm read}=U_{\rm coll}^\dagger
 =\begin{pmatrix}U_+^\dagger&0\\0&U_-^\dagger\end{pmatrix}S^\dagger .
 \label{eq:n4_collective_unitary}
\end{equation}
Thus the first operation \(S^\dagger\) resolves exchange symmetry without
measuring it, and the controlled rotations \(U_\pm^\dagger\) coherently
recombine all station-pair amplitudes within the corresponding sector.

After \(V_{\rm read}\), two four-output computational-basis detectors report
\((a,b)\in\{1,\ldots,4\}^2\), with
\(y=4(a-1)+b\). Pulling this terminal detection back through the unitary
gives the input-side PVM
\begin{align}
 M_{ab}&=|\psi_{ab}\rangle\!\langle\psi_{ab}|,\notag\\
 |\psi_{ab}\rangle&=U_{\rm coll}|a,b\rangle_{\rm out},\notag\\
 p_{ab}&=\langle\psi_{ab}|\rho_0^{\otimes2}|\psi_{ab}\rangle .
 \label{eq:n4_terminal_projectors}
\end{align}
Equivalently, the first ten \(|\psi_{ab}\rangle\)'s are the columns
\(S_+U_+|r\rangle\), and the remaining six are
\(S_-U_-|r\rangle\). Equation~\eqref{eq:n4_terminal_projectors} makes clear
that the two final detectors are local only in the \emph{output} ports: in
the original record basis, every click projects onto a generally entangled
two-copy state.

For example, after fixing the ordering in Eq.~\eqref{eq:n4_schur_bases}, the
coordinate vectors of the first symmetric and antisymmetric projection states
of the optimized receiver are
\begin{equation}
 [\psi_{+,1}]_{\mathcal B_+}=
 \begin{pmatrix}
 -0.4633-0.0841i\\
 -0.4981-0.0252i\\
 -0.1908+0.4309i\\
 -0.4728+0.2412i\\
 \phantom{-}0.0641-0.0183i\\
 -0.0366+0.0387i\\
 \phantom{-}0.0448-0.0350i\\
 \phantom{-}0.0661-0.0090i\\
 \phantom{-}0.0773-0.0265i\\
 \phantom{-}0.0572-0.0289i
 \end{pmatrix},
 \label{eq:n4_example_symmetric_port}
\end{equation}
and
\begin{equation}
 [\psi_{-,1}]_{\mathcal B_-}=
 \begin{pmatrix}
 -0.4033+0.0708i\\
 -0.0747-0.0293i\\
 \phantom{-}0.2224+0.5472i\\
 -0.2346-0.4638i\\
 \phantom{-}0.0859-0.0965i\\
 \phantom{-}0.3098-0.3070i
 \end{pmatrix}.
 \label{eq:n4_example_antisymmetric_port}
\end{equation}
At the working point in Eq.~\eqref{eq:n4_state}, these outcomes have
probabilities \(p_{+,1}=0.07218\) and \(p_{-,1}=0.04278\), respectively.
The other fourteen projection states are obtained from the remaining columns
of \(U_+\) and \(U_-\) in exactly the same way. The complete complex matrices
used here are supplied with the Supplemental Data, so
Eqs.~\eqref{eq:n4_collective_unitary} and
\eqref{eq:n4_terminal_projectors} reproduce all sixteen effects without any
additional optimization.

\begin{figure*}[t]
 \includegraphics[width=0.99\textwidth]
 {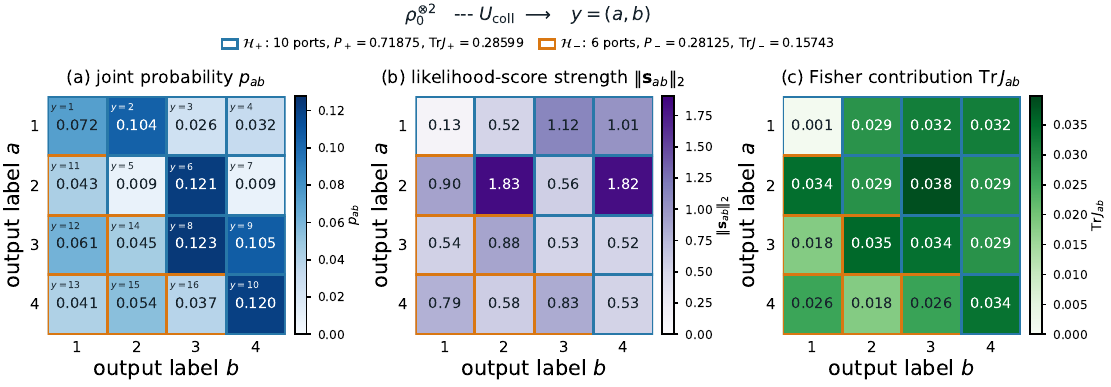}
 \caption{\label{fig:pvm_outcomes}
 Complete outcome atlas for the optimized \(N=4,n_s=2\) collective PVM at
 Eq.~\eqref{eq:n4_state}. Each cell is one resolved port after the collective
 unitary \(U_{\rm coll}\); the pair \((a,b)\) is an output label, not a copy
 or telescope identity. The three panels show its Born probability \(p_y\),
 local-score norm
 \(\|\bm s_y\|_2=\|\nabla_{\bm h}\ln p_y\|_2\), and total Fisher contribution
 \(\Tr J_y=p_y\|\bm s_y\|_2^2\), respectively. Blue and orange boundaries
 identify the exchange-symmetric and exchange-antisymmetric sectors; all 16
 outcomes enter the likelihood. Comparing the panels shows why outcome
 probability alone does not determine the information it carries.}
\end{figure*}

Operationally, this example asks which 16 orthogonal joint modes should be
counted after two four-station records have been stored. The repetitive
edge-first receiver selects a baseline and collapses each record separately.
The collective receiver instead preserves both records, coherently mixes
their station-pair amplitudes through \(U_{\rm coll}\), and performs the
strong readout only afterward. A resolved port therefore need not correspond
to one baseline: its probability can respond simultaneously to several edge
phases.

The PVM contains ten symmetric and six antisymmetric rank-one output ports.
Their total sector weights are
\begin{equation}
 P_\pm=\frac{1\pm\Tr(\rho_0^2)}{2}
 =(0.71875,0.28125).
 \label{eq:n4_sector_weights}
\end{equation}
Because the visibility magnitudes are fixed in this phase-estimation example,
\(\Tr(\rho_0^2)\) is independent of the six phases. The sector label alone
therefore carries no phase information; the gain comes from the
phase-dependent redistribution of probability among ports within each
sector.

Figure~\ref{fig:pvm_outcomes} resolves this mechanism outcome by outcome. A
port with probability \(p_y\) and local score \(\bm s_y\) contributes
\(p_y\bm s_y\bm s_y^{\mathsf T}\) to the CFIM. Probability and information
are therefore not interchangeable: a less frequent port can be highly
informative when its probability changes rapidly with the phases. A generic
collective-port score also has several nonzero components, so one click
constrains several edge phases rather than identifying one edge. The
optimized basis arranges these score vectors to span the six-dimensional
phase space without a weak direction.

The resulting CFIM has eigenvalues between \(0.0580\) and \(0.0895\), all
above the repetitive value \(1/24\), and gives \(\mathcal R=84.4204\).
Thus the lower risk is not obtained by sacrificing one phase direction to
improve another; it reflects a more balanced sensitivity across the full
six-dimensional phase space. Independent reconstruction of the full-space
effects verifies completeness, copy-permutation symmetry, and the CFIM to
machine precision.

\subsection{Finite-copy comparison and asymptotic limit}

We compare the receivers edge by edge without pretending that the other five
phases are known.  For the coordinate vector \(\bm e_e\) of physical edge
\(e\), define the nuisance-aware directional Fisher gain
\begin{equation}
 G_e(J)\equiv
 \frac{[\bm e_e^{\mathsf T}J^{-1}\bm e_e]^{-1}}
 {[\bm e_e^{\mathsf T}J_{\rep}^{-1}\bm e_e]^{-1}}
 =\frac{24}{[J^{-1}]_{ee}}.
 \label{eq:edge_fisher_gain}
\end{equation}
Unlike the ratio of diagonal CFIM entries, Eq.~\eqref{eq:edge_fisher_gain}
retains the covariance penalty from estimating all six edge phases
simultaneously.  Figure~\ref{fig:finite_edge_gains} reports every \(G_e\) and
their arithmetic mean, while Table~\ref{tab:finite_comparison} collects the
same comparison together with the global A-risk.

The overcomplete \(q=2\) receivers replace one orthogonal basis in each Schur
sector by a larger Parseval frame.  Physically, the extra output modes sample
more directions of the six-dimensional score space, while the third coherent
record at \(n_s=3\) supplies additional interfering multi-record pathways.
Figure~\ref{fig:finite_edge_gains} and
Table~\ref{tab:finite_comparison} show the resulting progression from the
two-copy PVM to the two- and three-copy POVMs and finally to the asymptotic
limit.  The finite-copy entries are the best validated local optima within
their stated receiver classes, rather than certificates of globally optimal
unrestricted POVMs.  At the complex working point
Eq.~\eqref{eq:n4_state}, the weak-commutator matrix has rank six, so the QFIM
itself is not an attainable covariance benchmark; the appropriate asymptotic
reference is the numerical unit-weight Holevo limit.

\begin{figure}[tb]
 \includegraphics[width=0.92\columnwidth]{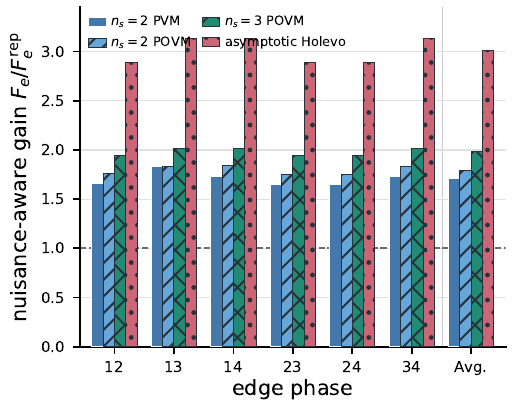}
 \caption{\label{fig:finite_edge_gains}
 Nuisance-aware Fisher gain for each physical edge phase at the \(N=4\)
 working point in Eq.~\eqref{eq:n4_state}; ``Avg.'' is the arithmetic mean
 over the six edges.  Each finite-copy bar uses the full inverse CFIM through
 Eq.~\eqref{eq:edge_fisher_gain}, so the other five phases remain nuisance
 parameters.  The final group is the numerical unit-weight Holevo limit.
 }
\end{figure}

\begin{table}[tb]
 \caption{\label{tab:finite_comparison}
 Finite-copy performance at Eq.~\eqref{eq:n4_state}.  Here \(V=J^{-1}\) for
 the explicit finite-copy receivers, \(G_A=144/\Tr V\), and
 \(\overline G_e\) averages the six directional gains.
 }
 \small
 \resizebox{\columnwidth}{!}{%
 \begin{tabular}{lccccc}
 \toprule
 receiver & outcomes & \(\Tr V\) & \(G_A\) & \(\overline G_e\) & range of \(G_e\)\\
 \midrule
 repetitive edge-first & \(12\) per copy & 144.000 & 1.000 & 1.000 & 1.000--1.000\\
 \(n_s=2\) Schur PVM & 16 & 84.420 & 1.706 & 1.708 & 1.647--1.834\\
 \(n_s=2\), \(q=2\) Schur POVM & 32 & 80.267 & 1.794 & 1.795 & 1.751--1.838\\
 \(n_s=3\), \(q=2\) Schur POVM & 88 & 72.703 & 1.981 & 1.981 & 1.948--2.015\\
 numerical asymptotic Holevo limit & --- & 47.892 & 3.007 & 3.012 & 2.889--3.135\\
 \bottomrule
 \end{tabular}}
\end{table}

\section{Loss, background, and atmospheric-noise model}
\label{sec:noise_model}

\subsection{Open-system description of loss and background}

The red dashed box in Fig.~1 of the main text contains optical collection,
transport, memory encoding, joint processing, and detection. Before the final
POVM, all imperfect components can be regarded as an open-system channel
\(\mathcal E\) acting on the complete weak thermal state, including its vacuum
and multiphoton sectors. A standard dilation of local attenuation couples the
received mode at station \(i\) to an environment mode \(\hat e_i\),
\begin{equation}
 \begin{aligned}
 \hat a_{i,\mathrm{det}}
 &=\sqrt{\eta_i}\,\hat a_{i,\mathrm{src}}
   +\sqrt{1-\eta_i}\,\hat e_i,\\
 \mathcal E(\rho)
 &=\Tr_E\!\left\{U_{\Lambda}
   [\rho\otimes\tau_E]U_{\Lambda}^{\dagger}\right\}.
 \end{aligned}
 \label{eq:thermal_loss_channel}
\end{equation}
where \(\eta_i\) is the net efficiency and \(\tau_E\) describes the light and
device noise coupled into the detected mode.
Equation~\eqref{eq:thermal_loss_channel} is a completely positive
trace-preserving channel; loss transfers excitations to the unobserved
environment and correspondingly increases the vacuum weight, whereas a
populated environment injects background photons into the receiver. This
beamsplitter dilation is the usual thermal-loss model for a bosonic
communication channel~\cite{HolevoWerner2001,Weedbrook2012}.

At the level of the first-order coherence matrix
\(\Gamma_{ij}=\Tr(\rho\hat a_i^\dagger\hat a_j)\), the same open-system
evolution has the general affine form
\begin{equation}
 \Gamma_{\mathrm{det}}
 =\Lambda^{1/2}\Gamma_{\mathrm{src}}\Lambda^{1/2}+\rho^L.
 \label{eq:effective_loss_background}
\end{equation}
The positive transfer operator \(\Lambda\) describes imperfections in signal
collection, transport, storage, and readout: its eigenvalues give the retained
signal efficiencies, while nondiagonal elements can represent coherent mixing
between received modes. The positive additive term \(\rho^L\) is the
occupation matrix of photons coupled into the receiver from the environment,
including sky, memory, transport, and detector backgrounds. It is not an
additional normalized density operator; its diagonal entries are background
occupation numbers, whereas its off-diagonal entries would describe mutually
coherent background fields.

For the telescope network considered in the main text, we specialize this
general model to
\begin{equation}
 \Lambda=\operatorname{diag}(\eta_1,\ldots,\eta_N),
 \qquad
 \rho^L=\operatorname{diag}(\varepsilon_1,\ldots,\varepsilon_N).
 \label{eq:diagonal_loss_background}
\end{equation}
The diagonal form of \(\Lambda\) follows from the station-resolved
architecture: aperture collection, spatial-mode injection, memory write/read,
fiber transmission, and detection occur along separately addressable optical
paths. These processes attenuate the amplitude carried by a station but do
not coherently transfer it to another station label. Their efficiencies can
therefore be combined into one scalar \(\eta_i\) for each path, giving
\(\Gamma_{ij}\mapsto\sqrt{\eta_i\eta_j}\Gamma_{ij}\).

The diagonal form of \(\rho^L\) has a different physical origin. The dominant
sky, memory, fiber, and detector backgrounds are generated locally and have no
stable phase relation between remote stations; ensemble averaging therefore
removes their cross-station coherences, leaving the mean local occupations
\(\varepsilon_i\). Coherent cross talk or a spatially correlated background
would require nondiagonal \(\Lambda\) or \(\rho^L\), respectively, and can be
retained within the general form of Eq.~\eqref{eq:effective_loss_background}.
Neither effect is expected to dominate the station-separated configuration
modeled here.

In the weak-light regime, dividing \(\Gamma_{\mathrm{det}}\) by its trace gives
the normalized one-photon matrix \(\rho^{(1)}\) used in the main-text
estimation model. Thus the single-photon description is the leading
informative sector of the full CPTP evolution, not an assumption that loss and
background act only after one-photon postselection.

Finally, the channel and the terminal detector may be combined into one
effective measurement. If \(\{\Pi_x\}\) is the POVM after the lossy memory
network, then the input-state probabilities are
\(p_x=\Tr[\mathcal E(\rho)\Pi_x]
=\Tr[\rho\,\mathcal E^\dagger(\Pi_x)]\). Hence the entire red box is
equivalently described by the effective POVM
\(\Pi_x^{\rm eff}=\mathcal E^\dagger(\Pi_x)\), including no-click and failure
outcomes. A trace-decreasing map appears only if these outcomes are discarded
and one conditions on successful detection.

\subsection{Classical disturbances and the use of closure phase}

In addition to quantum loss and background, a ground-based optical array is
affected by three principal classes of classical disturbance
~\cite{Monnier2003,Monnier2007,Tatulli2007}:
\begin{itemize}
 \item \emph{Station piston.} Atmospheric path-length fluctuations and
 instrumental delay errors add an approximately uniform phase \(\delta_i\)
 across the selected mode of station \(i\). They transform a baseline
 coherence as
 \(\Gamma_{ij}\mapsto
 e^{\mathrm i(\delta_i-\delta_j)}\Gamma_{ij}\), directly masking the source
 visibility phase.

 \item \emph{Higher-order wavefront distortion.} Turbulence also produces a
 phase profile that varies across a telescope pupil. Before correction and
 projection into a common spatial mode, the field cannot be represented by a
 single station phase. The residual distortion lowers the Strehl ratio and
 the coherent coupling or fringe contrast rather than acting as a pure
 piston.

 \item \emph{Amplitude fluctuation.} Scintillation, transparency variations,
 pointing and coupling drift, and gain instability modulate the received
 photon rate. For the meter-class apertures considered here, aperture and
 temporal averaging suppress atmospheric scintillation, while simultaneous
 photometric monitoring calibrates slow throughput
 changes~\cite{Kornilov2012}. We therefore treat the residual amplitude noise
 as subdominant and absorb it into the calibrated efficiencies and data
 covariance.
\end{itemize}

The latter two classes admit effective station-local mitigation. Adaptive
optics (AO) measures the spatially varying wavefront and corrects it with a
deformable mirror; spatial filtering then converts the residual mismatch into
a measurable coupling loss. Modern high-order systems operating at visible
and near-visible wavelengths have demonstrated residual control at the level
of roughly one tenth of a wavelength, and in favorable regimes below that
scale~\cite{Landman2024MagAOX}. The remaining loss of coherent throughput is
therefore represented by \(\eta_i\) in
Eq.~\eqref{eq:diagonal_loss_background} rather than by an additional unknown
baseline phase.

A spatially uniform piston is qualitatively different: a local AO sensor is
insensitive to the absolute phase shared by the whole pupil and therefore
cannot establish the differential phase between separated telescopes. It
must instead be measured by fringe tracking on the science target or on a
sufficiently bright nearby natural reference. Existing fringe trackers can
strongly suppress differential optical-path fluctuations, but their operation
depends on the flux and angular proximity of a suitable
reference~\cite{Lacour2019FringeTracker}. Many faint or isolated targets do
not provide such a reference within the relevant atmospheric patch, so
uncalibrated or residual piston remains the least reliably removable phase
disturbance.

This motivates retaining piston explicitly and eliminating it with closure
quantities. Writing the measured baseline phase as
\(\phi_{ij}^{\rm obs}=\phi_{ij}+\delta_i-\delta_j\), an oriented triangle obeys
\begin{equation}
 \Phi_{ijk}^{\rm obs}
 =\phi_{ij}^{\rm obs}+\phi_{jk}^{\rm obs}+\phi_{ki}^{\rm obs}
 =\phi_{ij}+\phi_{jk}+\phi_{ki}\equiv\Phi_{ijk}.
 \label{eq:noise_model_closure}
\end{equation}
The cancellation is exact for arbitrary station-local pistons and does not
require them to remain stable over the total observing time, provided that
the constituent baselines are acquired within a common phase-connected
sample. Closure does not remove genuinely baseline-dependent or nonclosing
instrumental errors; those must still be calibrated or included in the
covariance model. Under the station-local model relevant here, however,
piston is the dominant untracked phase nuisance after AO and photometric
calibration. We consequently retain all visibility-amplitude information but
project the phase data onto the piston-invariant closure space. The following
section implements this projection and its nuisance elimination explicitly in
the RML likelihood.

\section{Regularized maximum-likelihood imaging test}
\label{sec:rml}
\subsection{Data model and receiver covariances}

The imaging test keeps the Fourier coverage and reconstruction algorithm fixed
while changing only the receiver-induced data covariance.  For wavelength
\(\lambda\), projected baseline \(\bm b_{ij}(t)\), and pixelized trial image
\(I_{\lambda,p}\), the model visibility is
\begin{equation}
 g_{ij}^{\rm mod}(\lambda,t)=
 \frac{\sum_pI_{\lambda,p}
 e^{-2\pi i\bm b_{ij}(t)\cdot\bm\theta_p/\lambda}}
 {\sum_pI_{\lambda,p}}.
 \label{eq:rml_visibility}
\end{equation}

Ground-based phase measurements are corrupted by rapidly varying,
station-dependent atmospheric pistons.  Closure phases remove these local
shifts and are therefore the phase observables supplied to the RML
reconstruction.  Orient the \(E=N(N-1)/2\) baselines once and collect their
measured phases in \(\bm\phi^{\rm obs}\).  After removing the unobservable
global piston, the identifiable local parametrization is
\begin{equation}
 \bm\phi^{\rm obs}=Q\bm\Phi+K\bm\delta ,
 \qquad Q^{\mathsf T}K=0.
 \label{eq:rml_edge_decomposition}
\end{equation}
Here the \(N-1\) columns of \(K\) span the station-incidence, or cut, space,
the \(C=E-(N-1)\) columns of \(Q\) span the orthogonal cycle space, and
\(\bm\delta\) contains the unknown station pistons.  The vector \(\bm\Phi\)
therefore contains all phase combinations that are invariant under
station-based atmospheric shifts~\cite{Jennison1958,Monnier2003}.

This coordinate change does not make \(\bm\delta\) known.  Partitioning the
receiver CFIM in the coordinates \((\bm\Phi,\bm\delta)\), the usable closure
information is obtained by eliminating the piston nuisance through the Schur
complement
\begin{equation}
 J_{\Phi}
 =J_{\Phi\Phi}
 -J_{\Phi\delta}J_{\delta\delta}^{+}J_{\delta\Phi}.
 \label{eq:rml_full_schur}
\end{equation}
The subtracted term is precisely the apparent closure response that can be
reproduced by changing the unknown station pistons.  For the six-station
array, \(E=15\), \(N-1=5\), and \(C=10\), so this elimination concentrates
the phase information from the 15-dimensional edge space into ten
atmosphere-invariant coordinates.  The implemented likelihood retains these
coordinates together with all 15 visibility amplitudes and applies the same
nuisance elimination to the combined amplitude--closure block before
inversion.  This gives the \(25\times25\) covariance used below, including
its amplitude--closure cross block.  The loop-only gain in
Fig.~\ref{fig:rml_stats}(b) applies one further Schur complement that treats
the amplitudes as nuisance parameters.

The model value of each closure phase is
\begin{equation}
 \Phi_{ijk}^{\rm mod}=\arg g_{ij}^{\rm mod}
 +\arg g_{jk}^{\rm mod}+\arg g_{ki}^{\rm mod}.
 \label{eq:rml_closure}
\end{equation}
The chosen balanced basis is
\[
 \{123,124,125,134,136,245,256,346,356,456\}.
\]
The covariance is transformed from the orthonormal \(Q\) coordinates to these
ten physically transparent triangle coordinates before reconstruction; no
station-piston phase is given to the optimizer.

The reduction also permits integration beyond the atmospheric coherence
time.  In each short atmospheric cell \(b\), \(\bm\delta_b\) may be entirely
different, but \(Q^{\mathsf T}K\bm\delta_b=0\) identically.  Every cell
therefore probes the same source closure coordinates, and the independent
information contributions add,
\begin{equation}
 J_{\Phi}^{\rm tot}=\sum_b J_{\Phi,b}.
 \label{eq:rml_long_time_addition}
\end{equation}
This is a statistical accumulation of gauge-invariant data, not a coherent
average of raw edge phases or optical fields over the full observing time.
It assumes simultaneous baseline sampling and station-based phase errors;
non-closing instrumental terms must instead be calibrated or included as
additional nuisance parameters.

For each wavelength--epoch sample, the residual vector contains the fifteen
visibility amplitudes and ten closure phases.  The likelihood uses the full
receiver-specific \(25\times25\) covariance, including its amplitude--closure
cross block,
\begin{equation}
 \chi_s^2(I)=\sum_{\lambda,t}
 \bm r_{s,\lambda t}(I)^{\mathsf T}
 \Sigma_{s,\lambda t}^{+}\bm r_{s,\lambda t}(I),
 \label{eq:rml_likelihood}
\end{equation}
Here the closure residual is evaluated in the
real-valued local tangent coordinates of the calibrated POVM working point,
rather than by independently wrapping reconstructed edge phases.  Half of
the detected photons is assigned to the amplitude branch and half to the phase
branch for all three strategies.  The synthetic draws are paired as
\begin{equation}
 \bm d_{\lambda t}^{(s)}
 =\bm d_{\lambda t}^{\rm true}+L_{\lambda t}^{(s)}\bm z_{\lambda t},
 \qquad
 L_{\lambda t}^{(s)}L_{\lambda t}^{(s)\mathsf T}
 =\Sigma_{\lambda t}^{(s)},
 \label{eq:paired_draw}
\end{equation}
where the same standard-normal vector \(\bm z_{\lambda t}\) is used for every
strategy \(s\), while \(L_{\lambda t}^{(s)}\) is receiver dependent.

We compare (i) uniform edge-first readout, (ii) the
locally optimized one-copy POVM, and (iii) the joint receiver induced by that
same POVM for a finite coherent block \(n_s=3\).  The one-copy POVM is optimized
independently at every wavelength--epoch working point.  For the joint
receiver, an overcomplete three-copy POVM is additionally optimized at the
near-transit working point of each 10-nm channel, whose QFI-weighted Fisher
retention relative to the asymptotic receiver ranges from \(0.306\) to
\(0.307\) across the ten channels and rescales only the corresponding channel;
the same calibrated factor is then reused across noise seeds.  

\subsection{Likelihood-matched RML reconstruction}

For each wavelength channel, we optimize the \(40\times40\) real-space pixel
intensities directly, subject to nonnegativity and unit total flux.  For
receiver \(s\), the reconstruction minimizes
\begin{equation}
 \begin{aligned}
 \mathcal L_s(I)={}&\chi_s^2(I)
 +w_{\rm prior}\|I-I_0\|_2^2
 +w_{\rm TV}{\rm TV}(I)\\
 &+w_{\rm ent}\sum_p I_p\ln\frac{I_p}{I_{0,p}}.
 \end{aligned}
 \label{eq:rml_objective}
\end{equation}
Here \(I_0\) is a broad Gaussian reference image.  The quadratic prior keeps
the large-scale morphology near \(I_0\), the total-variation term suppresses
pixel-scale oscillations while retaining sharp structures, and the relative
entropy discourages fragmented low-intensity features.  No translation-gauge,
compact-core, single-centre, or other morphology-specific penalty is imposed.
The common weights are
\begin{equation}
 (w_{\rm prior},w_{\rm TV},w_{\rm ent})
 =(0.01,0.01,0.005).
 \label{eq:rml_baseline_weights}
\end{equation}
These weights, the positivity and flux constraints, and the \(40\times40\)
optimization grid are fixed before the ensemble run and are identical for
every receiver, wavelength, and noise seed.

The candidate is selected using only its fit to the measured data.  After
whitening the mixed amplitude--closure residual with the full covariance in
Eq.~\eqref{eq:rml_likelihood}, we define
\begin{equation}
 \overline{\chi}^{\,2}_{\rm amp}
 =\frac{1}{N_{\rm amp}}\sum_q z_{{\rm amp},q}^{2},
 \qquad
 \overline{\chi}^{\,2}_{\Phi}
 =\frac{1}{N_{\Phi}}\sum_q z_{\Phi,q}^{2}\,,
 \label{eq:rml_component_chi2}
\end{equation}
where the scale factor \(\chi\) quantifies the consistency between the
reconstructed real-space image and the collected data samples at the
corresponding Fourier components.  In a standard RML pipeline, a reconstruction
is generally considered acceptable when \(\chi^2\leq 3\).

For each receiver and wavelength channel, we test the dirty-image and
Gaussian-prior starts.  We retain candidates in the common discrepancy window
\(0.75\leq\overline\chi^2_{\alpha}\leq1.00\) and select the one minimizing
\begin{equation}
 D=\max_{\alpha\in\{{\rm amp},\Phi\}}
 \left|\ln\frac{\overline{\chi}^{\,2}_{\alpha}}{0.85}\right|.
 \label{eq:rml_discrepancy}
\end{equation}
This discrepancy rule is receiver blind and does not use the true image or
any image-correlation metric.

All candidates use Adam with 2600 base iterations and learning rate
\(10^{-2}\).  If the componentwise discrepancy window is not reached, longer
passes use iteration multipliers \((1.0,1.8,3.0,4.6)\) and learning-rate
multipliers \((1.0,0.75,0.55,0.40)\).  The candidate set and schedule are
identical for all three receivers.

The simulated observation covers \(600\)--\(700\,\mathrm{nm}\) in ten
\(10\,\mathrm{nm}\) channels and 36 Earth-rotation epochs separated by
15 min.  Each sample has \(100\,\mathrm{ms}\) effective integration, giving
an \(8.75\,\mathrm{h}\) Fourier-coverage span.  The six-station array uses
full-aperture collection from the three central facilities and three
\(6\,\mathrm{m}\) remote stations; its maximum separation is approximately
\(10.88\,\mathrm{km}\).  We assume 2\% local photon-collection efficiency,
\(0.2\,\mathrm{dB/km}\) fiber attenuation, and background occupancy
\(10^{-9}\).  The source is an NGC~4151-like core--disk--BLR model with
\(m_V\simeq11.9\), a BLR radius of \(72\,\mu\mathrm{as}\), and a width of
\(12\,\mu\mathrm{as}\).  Each spectral channel is reconstructed independently,
and the displayed broadband image is their photon-weighted stack, without any
post-reconstruction translation or image registration.

\subsection{Paired-seed imaging statistics}

We reran the complete reconstruction pipeline for the twelve paired noise
seeds.  For a given seed, the same underlying standard-normal noise
realization is used for all three receivers and is mapped through their
respective covariance matrices.  Thus the receiver comparisons are paired
while retaining the correct receiver-dependent noise.

The top panel of figure~4 in the main text shows the BLR-annulus and all-pixel correlations.
The circles denote individual paired seeds, whereas the bars and error bars
give the mean and standard error.  The representative reconstruction shown in
Fig.~3 of the main text uses seed \(20260536\), chosen as the seed closest to the
twelve-seed mean over the six standardized image-correlation coordinates
corresponding to the three receivers and two image regions.  This choice is
used only for visualization; all statistical conclusions below include the
full twelve-seed ensemble.

Table~\ref{tab:rml_chi2} reports the corresponding componentwise residuals as
a numerical convergence diagnostic.  Across the three receivers, the mean
amplitude residuals span only \(0.834\)--\(0.871\), while the closure residuals
span \(0.885\)--\(0.922\).  The maximum-to-minimum ratios are \(1.044\) and
\(1.041\), respectively, well within the prescribed 10\% matching tolerance.
The comparison therefore probes differently informative data at comparable
fitting quality, rather than rewarding one receiver through systematic
underfitting or overfitting.

\begin{table}[t]
\caption{\label{tab:rml_chi2}
Componentwise reduced residuals for the three common-prior,
likelihood-matched RML reconstructions.  Entries are the mean \(\pm\) standard
error over twelve paired noise seeds.}
\centering
\small
\begin{ruledtabular}
\begin{tabular}{lcc}
Receiver
& \(\overline{\chi}^{\,2}_{\rm amp}\)
& \(\overline{\chi}^{\,2}_{\Phi}\)\\
\hline
Uniform edge-first
& \(0.834\pm0.005\) & \(0.918\pm0.006\)\\
Optimal one-copy POVM
& \(0.850\pm0.005\) & \(0.922\pm0.007\)\\
Joint POVM, \(n_s=3\)
& \(0.871\pm0.005\) & \(0.885\pm0.006\)\\
\end{tabular}
\end{ruledtabular}
\end{table}

The BLR-annulus correlations are \(0.686\pm0.007\), \(0.726\pm0.007\), and
\(0.810\pm0.004\) for the uniform edge-first, optimal one-copy, and
finite-\(n_s=3\) joint receivers, respectively.  The corresponding all-pixel
correlations are \(0.905\pm0.003\), \(0.914\pm0.003\), and \(0.938\pm0.002\).
The joint receiver therefore improves the BLR and all-pixel correlations over
edge-first by \(0.123\pm0.006\) and \(0.0328\pm0.0037\), respectively.  The
radial-profile RMSE values decrease from \(0.4133\pm0.0068\) for edge-first to
\(0.3880\pm0.0066\) for the optimal one-copy POVM and
\(0.3055\pm0.0059\) for the joint receiver.

For the paired difference between the joint receiver and its one-copy parent,
a nonparametric percentile bootstrap with \(5\times10^4\) resamples gives
95\% confidence intervals \([0.0708,0.0968]\) for the BLR-annulus correlation
and \([0.0171,0.0321]\) for the all-pixel correlation.  The improvement is
therefore reproducible across paired noise realizations rather than being set
by the representative seed.

The bottom panel of figure~4 in the main text reports the corresponding nuisance-aware
closure-phase SNR gains.  For each loop, the bar is the geometric mean over
ten wavelengths and 36 epochs, while the whiskers show the 5--95\% sample
quantiles.  Across all 3600 wavelength--epoch--loop samples, the geometric-mean
gains relative to edge-first are \(1.202\) for the optimal one-copy POVM and
\(1.921\) for its finite-\(n_s=3\) joint receiver.  The joint receiver gains
an additional factor \(1.598\) over its one-copy parent, with the individual
loop gains ranging from \(1.545\) to \(1.664\).  The asymptotic collective
limit gives a gain of \(3.471\) relative to edge-first. Although the performance of this 3-copy protocol is still distant from the asymptotic quantum limit, it already exhibits clear quantum advantage over single-copy measurement.

All three receivers use the same photon budget, Fourier coverage, paired noise
draws, image constraints, generic regularizers, discrepancy criterion, and
baseline optimization schedule; no translation or morphology-specific core
prior is imposed.  Their comparable reduced residuals show that the differences
in image quality are not produced by unequal RML fitting quality; they arise
primarily from the receiver-dependent data SNR.  This ordering agrees with the
optimizer-independent amplitude--closure SNR gains in
Fig.\,4 of the main text.  The Fisher comparison therefore quantifies the
receiver advantage independently of imaging, while the reconstruction test
shows that the advantage survives a finite-array RML pipeline.

\end{document}